\def\eqref#1{equation~\ref{#1}}
\def\1{\bm{1}}
\def\vx{{\bm{x}}}
\def\vy{{\bm{y}}}
\DeclareMathAlphabet{\mathsfit}{\encodingdefault}{\sfdefault}{m}{sl}
\SetMathAlphabet{\mathsfit}{bold}{\encodingdefault}{\sfdefault}{bx}{n}
\DeclareMathOperator*{\argmin}{arg\,min}
\newcommand{\norm}[1]{\left \lVert #1 \right \rVert}
\newtheorem{theorem}{Theorem}[section]
\crefname{equation}{}{}
\title{Normalization-equivariant Diffusion Models: Learning Posterior Samplers From Noisy And Partial Measurements}
\author{Brett Levac \& Jon Tamir  \\
Department of Electrical Engineering\\
University of Texas at Austin\\
Austin, TX 78705, USA \\
\texttt{\{blevac,jtamir\}@utexas.edu} \\
\And
Marcelo Pereyra \\
Heriot-Watt University, MACS \&\\ Maxwell Institute for Mathematical Sciences \\
EH14 4AS, Edinburgh, United Kingdom \\
\texttt{M.Pereyra@hw.ac.uk} \\
\AND
Juli\'an Tachella \\
CNRS \& ENSL \\
Lyon, France \\
\texttt{julian.tachella@ens-lyon.fr}
}
\begin{document}

\maketitle

\begin{abstract}
Diffusion models (DMs) have rapidly emerged as a powerful framework for image generation and restoration, achieving remarkable perceptual quality. However, existing DMs are primarily trained in a supervised manner by using a large corpus of clean images. This reliance on clean data poses fundamental challenges in many real-world scenarios, where acquiring noise-free data is hard or infeasible, and only noisy and potentially incomplete measurements are available. While some methods are capable of training DMs using noisy data, they are generally effective only when the amount of noise is very mild or when some additional noise-free data is available. In addition, existing methods for training DMs from incomplete measurements require access to multiple complementary acquisition processes, an assumption that poses a significant practical limitation. Here we introduce the first approach for learning DMs for image restoration using only noisy measurement data from a single operator. As a first key contribution, we show that DMs, and more broadly minimum mean squared error denoisers, exhibit a weak form of scale equivariance linking rescaling in signal amplitude to changes in noise intensity. We then leverage this theoretical insight to develop a denoising score-matching strategy that generalizes robustly to noise levels lower than those present in the training data, thereby enabling the learning of DMs from noisy measurements. To further address the challenges of incomplete and noisy data, we integrate our method with equivariant imaging, a complementary self-supervised learning framework that exploits the inherent invariants of imaging problems, in order to train DMs for image restoration from single-operator measurements that are both incomplete and noisy. We validate the effectiveness of our approach through extensive experiments on image denoising, demosaicing, and inpainting, along with comparisons with the state of the art.
\end{abstract}

\section{Introduction}
\label{sec:intro}
Nearly all image data used in computer vision tasks come from a physical imaging system (e.g., digital camera, PET/CT scanner, etc.). These physical imaging systems span many disparate fields, from astronomy~\citep{Vojtekova_2020} to medicine~\citep{heckel2024deeplearningacceleratedrobust}. In each application, the crucial similarity is that the measurements from the imaging system are not always directly useful for downstream tasks. This can be due to the measurement process (e.g., mosaicing, blurring, Fourier encoding, etc.) and/or measurement noise. 

Recovering the desired image $\mathbf{x}\in\mathbb{R}^n$ from (noisy) measurements $\mathbf{y}\in \mathbb{R}^m$ requires solving an inverse problem. This inversion process is typically ill-posed and some regularization is needed to obtain a robust inversion. Previously, handcrafted priors such as sparsity~\citep{donoho2006,lustig2007} were popular in many inversion schemes. With the growth of deep learning methods, many end-to-end networks for image recovery have been proposed~\citep{VarNet, Aggarwal_2019, liang2021swinirimagerestorationusing}. Recently, there has been interest in using deep generative models~\citep{bora2017,kawar2022, jalal2021, chung2024diffusionposteriorsamplinggeneral,Holden2022,Gonzalez2022,Spagnoletti2025} to compute estimators or sample from the posterior distribution $p(\mathbf{x}|\mathbf{y})$. In both settings, a common framework is supervised learning, where a large corpus of clean data $\{\mathbf{x}_\textrm{i}\}_{\textrm{i}=1}^{N} \sim p(\mathbf{x})$ is available for training. 

In practice, however, we often do not have direct access to samples from the distribution $p(\mathbf{x})$ but rather to measurement samples $\{\mathbf{y}_i\}_{i=1}^N \sim p(\mathbf{y})$ where $\mathbf{y}_i$ are generated from the measurement process of the imaging system of interest. Prior works have proposed methods for learning a variety of estimators using only measurement data. These works range from end-to-end methods~\citep{SSDU, moran2019noisier2noiselearningdenoiseunpaired, lehtinen2018noise2noiselearningimagerestoration, tachella2025unsureselfsupervisedlearningunknown, Monroy_2025_CVPR} to generative techniques~\citep{bora2018ambientgan, daras2023ambientdiffusionlearningclean,daras2024consistentdiffusionmeetstweedie, daras2024imageworth, lu2025stochasticforwardbackwarddeconvolutiontraining}.
A key similarity between many of these techniques, however, is an assumption of either access to some noise-free data and/or measurements arising from multiple imaging operators. These assumptions may not be met in practice, where all data are noisy and observed via a single measurement operator. 

Herein, we propose a method for learning generative diffusion models capable of restoring corrupted images using only degraded measurements, obtained from multiple images processed through a common measurement operator. Our first key contribution demonstrates that diffusion models (DMs), and more broadly minimum mean squared error (MMSE) denoisers, exhibit a weak form of scale equivariance, wherein rescaling the signal amplitude induces a corresponding change in the noise level. Leveraging this theoretical insight, we introduce a denoising score-matching framework that enables the learning of denoisers at noise levels below those present in the measurements. These denoisers can then be integrated into a DM sampler, yielding reconstructions with higher perceptual quality compared to existing self-supervised methods. Furthermore, we extend our approach beyond denoising by incorporating equivariance to geometric transformations (e.g., translations and rotations), allowing us to learn DMs in the challenging setting where data are observed via a single incomplete measurement operator. 


\section{Related Work}
\label{sec:rel_work}
\paragraph{Diffusion Modeling}
The primary goal of generative modeling techniques is to learn a probability distribution from observed samples, i.e, $\{\mathbf{x}_i\}_{i=1}^N \sim p(\mathbf{x})$. This can be accomplished in a variety of ways~\citep{kingma2022autoencodingvariationalbayes,goodfellow2014generativeadversarialnetworks, ho2020denoisingdiffusionprobabilisticmodels}. After learning this distribution, the model can be queried to generate new samples from $p(\mathbf{x})$ via a sampling procedure. The most popular methods currently are diffusion based approaches~\citep{ho2020denoisingdiffusionprobabilisticmodels, song2021scorebasedgenerativemodelingstochastic, karras2022elucidatingdesignspacediffusionbased}. These techniques accomplish the task 
by training a deep neural network with parameters $\boldsymbol{\theta}$ on the supervised loss,
\begin{equation}
    \label{eqn:MMSE_loss}
   \min_{\boldsymbol{\theta}} \sum_{i=1}^N \mathcal{L}_{\textrm{SUP}}(\mathbf{x},{\boldsymbol{\theta}}) \quad \text{where}  \quad \mathcal{L}_{\textrm{SUP}}(\mathbf{x},{\boldsymbol{\theta}}) = \mathbb{E}_{\boldsymbol{\eta},\sigma_t}\|\operatorname{D}_{\boldsymbol{\theta}}(\mathbf{x} + \sigma_t\boldsymbol{\eta},\sigma_t) - \mathbf{x} \|^2,
\end{equation}
with $\boldsymbol{\eta}\sim\mathcal{N}(\mathbf{0}, \mathbf{I})$ and $\sigma_t\sim p(\sigma)$, such that the learned network approximates the MMSE estimator
$\operatorname{D}_{\boldsymbol{\theta}}(\mathbf{x} + \sigma_t\boldsymbol{\eta}, \sigma_t) \approx \mathbb{E}\{\mathbf{x}|\mathbf{x} + \sigma_t\boldsymbol{\eta}\}$ at each noise level $\sigma_t>0$.
The probability distribution can be sampled by solving the following stochastic differential equation (SDE) in reverse time,
\begin{equation}
\label{eqn:reverse_sde}
    d\mathbf{x} = -2\dot{\sigma}_t \frac{\operatorname{D}_{\boldsymbol{\theta}}(\mathbf{x}_,\sigma_t)-\mathbf{x}}{\sigma_t}dt + \sqrt{2\dot{\sigma}_t\sigma_t}d\boldsymbol{\omega}_t,
\end{equation} 
where $\boldsymbol{\omega}_t$ is a Brownian noise process and $t\in(0,1)$, using a variety of solvers~\citep{karras2022elucidatingdesignspacediffusionbased}.


\paragraph{Self-Supervised Learning for Denoising}
When the data needed to train restoration and diffusion models come from a measurement device, there is no true notion of a clean, ground-truth training set~\citep{Belthangady2019, lehtinen2018noise2noiselearningimagerestoration}. The two primary challenges in learning models solely from corrupted measurements are noise and rank-deficient measurement operators. 
In the simplest denoising problem, the data are corrupted by additive Gaussian noise, i.e.,
\begin{equation}
    \label{eqn:noise_forward}
    \mathbf{y} = \mathbf{x} + \sigma_n\boldsymbol{\eta},  \quad \boldsymbol{\eta} \sim \mathcal{N}(\mathbf{0},\mathbf{I}),
\end{equation}
where $\sigma_n$ is called the \emph{measurement} noise level. Self-supervised methods for learning denoisers from a dataset of noisy measurements alone, $\{\mathbf{y}_i\}_{i=1}^N$, have been developed and used with great success. For example, Stein's unbiased risk estimator (SURE)~\citep{Stein1981} provides an approach to learn an unbiased estimate of the MMSE estimator with access only to noise-corrupted data~\citep{metzler2020unsupervisedlearningsteinsunbiased, soltanayev2021trainingdeeplearningbased,asad2023} by minimizing an equivalent objective which only relies on the measurements (where $\mathrm{div}$ is the divergence operator):
\begin{equation}
    \label{eqn:noise_sure}
\mathcal{L}_{\textrm{SURE}}(\mathbf{y},{\boldsymbol{\theta}}) = \|\mathbf{y}-\operatorname{D}_{\boldsymbol{\theta}}(\mathbf{y},\sigma_n)\|^2 + 2\sigma_n^2 \, \text{div}\operatorname{D}_{\boldsymbol{\theta}}(\mathbf{y},\sigma_n).
\end{equation}

A key limitation of SURE is that it can only be used to obtain an unbiased estimate of the supervised denoising loss in~\Cref{eqn:MMSE_loss} for noise levels at and above the measurement noise level (i.e., $\sigma_t\geq \sigma_n$). To learn diffusion processes with noisy data $\mathbf{y}$, however, we must be able to learn MMSE estimators below the measurement level $\sigma_t\leq \sigma_n$ as well. Recently, methods have been proposed to learn denoisers below the measurement noise level of the data~\citep{daras2024consistentdiffusionmeetstweedie,daras2024imageworth} by running the sampler during training and enforcing consistency, but they struggle when there are no clean data available to assist in training \citep{lu2025stochasticforwardbackwarddeconvolutiontraining}. In this paper, we show that \textit{normalization-equivariance} can help bridge the gap to additionally learn below the measurement noise level without any clean data.

\paragraph{Self-Supervised Learning for Linear Inverse Problems}
In the general linear inverse problem setting where data are corrupted by additive Gaussian noise, the measurements are given by
\begin{equation}
    \label{eqn:gen_forward}
    \mathbf{y} = \mathbf{A}\mathbf{x} + \sigma_n^2\boldsymbol{\eta},  \quad \boldsymbol{\eta} \sim \mathcal{N}(\mathbf{0},\mathbf{I}),
\end{equation}
where $\mathbf{A}\in \mathbb{R}^{m\times n}$ is typically rank-deficient or highly ill-conditioned, and thus cannot be easily inverted.
To overcome the rank-deficient operator, several end-to-end techniques have been proposed to train recovery methods from only corrupted measurement data~\citep{SSDU, tachella2022unsupervisedlearningincompletemeasurements}, including generative models~\citep{bora2018ambientgan, daras2023ambientdiffusionlearningclean, kawar2024gsurebaseddiffusionmodeltraining, kelkar2023ambientflowinvertiblegenerativemodels,park2025measurementscorebaseddiffusionmodel, Asad2025}. Many of these techniques split the measurements and predict one subset of the measurements from a different (potentially disjoint) subset of measurements from the same sample. These measurement splitting techniques require access to measurement datasets $\{\mathbf{y}_i,\mathbf{A}_i\}_{i=1}^N$ with many different operators $\mathbf{A}_i\in \mathcal{A} \triangleq \{\mathbf{A}_1,\dots, \mathbf{A}_{G}\}$. Intuitively, identifying a unique distribution $p(\mathbf{x})$ from measurements should be easier as $|\mathcal{A}|$ grows. 

If, however, $|\mathcal{A}|=1$ (i.e., there is only one measurement device) we must use additional information about the signal distribution $p(\mathbf{x})$ to assist in its recovery.
The relatively mild assumption that the signal set of plausible images is invariant to a group of transformations $\{\mathbf{T}_g \}_{g=1}^{G}$, such as translations, flips and/or rotations, is often enough for learning from a single forward operator~\citep{Chen_2022_CVPR},  as it allows us to create virtual operators 
in the following way:
\begin{align}
    \mathbf{y}_i = \mathbf{A}\mathbf{x}=\mathbf{A}\mathbf{T}_g^{-1}\mathbf{T}_g\mathbf{x}=\mathbf{A}_g\mathbf{x}'.
\end{align}
where $\mathbf{A}_g=\mathbf{A}\mathbf{T}_g$ and $\mathbf{x}'= \mathbf{T}_g\mathbf{x}$. In other words, we get $G$ virtual operators $\mathcal{A} = \{\mathbf{A}\mathbf{T}_1,\dots,\mathbf{A}\mathbf{T}_G\}$~\citep{julian2023}.
This invariance information can be enforced in a self-supervised way using the equivariant imaging loss~\citep{Chen_2022_CVPR}.



\paragraph{Normalization Equivariant Denoisers}
\label{subsec:norm_equivariant_denoisers}
Previous lines of work have explored architecture changes for better noise level generalization in supervised denoisers \citep{mohan2020robustinterpretableblindimage, herbreteau2024normalizationequivariantneuralnetworksapplication}. Recently, normalization-equivariant network architectures for image denoising were proposed in the supervised setting~\citep{herbreteau2024normalizationequivariantneuralnetworksapplication}. 
This work poses that certain conventional neural network components, such as ReLUs, should be removed and replaced with alternatives that retain normalization-equivariant properties over the network. This has been shown to lead to better generalization across noise levels in supervised training scenarios.

\section{Normalization Invariant Denoising}
\label{sec:method}

In this section, we focus on the Gaussian denoising problem. We present a new self-supervised loss for learning denoisers below the measurement noise level, $\sigma_n$, using noisy data alone, $\{ \mathbf{y}_i = \mathbf{x}_i + \sigma_n \boldsymbol{\eta}_i \}_{i=1}^N$. The resulting denoisers can subsequently be used to implement diffusion samplers.

As stated above, the MMSE denoiser can be learned at the measurement noise level without access to noise-free data by using SURE \Cref{eqn:noise_sure}.
However, SURE does not yield reliable estimates for noise levels below that of the measurements; that is, it is only valid for $\sigma \geq \sigma_n$. To overcome this limitation, we assume that the denoiser satisfies the following normalization equivariance property:
\begin{equation}
    \label{eqn:norm_eq_def}
    \forall \alpha\in\mathbb{R}_+, \forall \mu \in \mathbb{R}, \; \operatorname{D}_{\boldsymbol{\theta}}(\alpha\mathbf{y}+\mu \mathbf{1}, \sigma \alpha)=\alpha \operatorname{D}_{\boldsymbol{\theta}} (\mathbf{y}, \sigma)+\mu \mathbf{1},
\end{equation}
for all $\mathbf{y}$ in the measurement distribution. 
If~\Cref{eqn:norm_eq_def} is satisfied, then denoising at any noise level $\sigma^\prime$ can be straightforwardly achieved by rescaling a single denoiser $\operatorname{D}_{\boldsymbol{\theta}} (\cdot, \sigma)$ trained at level $\sigma$, i.e.,
$\operatorname{D}_{\boldsymbol{\theta}}(\mathbf{y}, \sigma^\prime)=\tfrac{\sigma^\prime}{\sigma}\operatorname{D}_{\boldsymbol{\theta}} (\tfrac{\sigma}{\sigma^\prime}\mathbf{y}, \sigma)$. Equivariance can be enforced through architectural constraints as done in previous work~\citep{herbreteau2024normalizationequivariantneuralnetworksapplication} with supervised denoising. Instead, we propose to embed the normalization equivariance property into the original SURE loss by using the modified loss
\begin{equation}
\label{eqn:norm_eq_noise_sure}
\begin{split}
    \mathcal{L}_{\textrm{NE-SURE}} (\mathbf{y}, {\boldsymbol{\theta}}) &=  \mathbb{E}_{\alpha,\mu} \left\{\|\alpha \mathbf{y}+\mu\mathbf{1}-\operatorname{D}_{\boldsymbol{\theta}}(\alpha \mathbf{y} +\mu\mathbf{1},\alpha\sigma_n)\|^2 + 2(\alpha\sigma_n)^2 \, \text{div}\operatorname{D}_{\boldsymbol{\theta}}(\alpha \mathbf{y}+\mu\mathbf{1},\alpha\sigma_n)\right\},
\end{split}
\end{equation}
where $\alpha \sim \mathcal{U}(0,1)$ and $\mu \sim  \mathcal{U}(0,1)$, as we find that this leads to better performance for self-supervised learning, and where incorporating $\mu$ improves generalization. In practice, we evaluate the expectation by sampling a random pair ($\alpha,\mu$), and use Monte Carlo SURE~\citep{ramani} to approximate the divergence, which requires an additional network evaluation. Other approximations can be similarly employed under our equivariance assumption \citep{Monroy_2025_CVPR}.

\paragraph{Posterior Sampling}
After training a self-supervised, or supervised, denoiser we can sample from the posterior of noised measurement by initializing a reverse-time diffusion SDE processes at the measurement noise level $\sigma_n$ and running the SDE solver on~\Cref{eqn:reverse_sde} to some $\sigma_{\min}$.



\paragraph{Understanding Normalization Invariance}
A valid question to ask is: Are there realistic priors whose associated MMSE denoisers verify this assumption?  An initial answer to this question can be investigated by looking at the definition of a scale equivariant MMSE estimator:
\begin{align}
\operatorname{D}(\alpha \mathbf{y},\alpha\sigma_n) = \mathbb{E}(\textbf{x}|\alpha\textbf{y},\alpha\sigma_n)&= \frac{\int_{\mathbb{R}^n}p(\textbf{x})p(\alpha\textbf{y}|\textbf{x},\alpha\sigma_n)\mathbf{x}\textrm{d}\mathbf{x}}{\int_{\mathbb{R}^n}p(\tilde{\textbf{x}})p(\alpha\textbf{y}|\tilde{\textbf{x}},\alpha\sigma_n)\textrm{d}\tilde{\textbf{x}}}\,
\\  &= \frac{\int_{\mathbb{R}^n}p( \textbf{x})p(\textbf{y}|\frac{\textbf{x}}{\alpha},\sigma_n)\mathbf{x}\textrm{d}\mathbf{x}}{\int_{\mathbb{R}^n}p(\tilde{\textbf{x}})p(\textbf{y}|\frac{\tilde{\textbf{x}}}{\alpha},\sigma_n)\textrm{d}\tilde{\textbf{x}}}\, 
\\  &= \frac{1}{\alpha} \frac{\int_{\mathbb{R}^n}p( \alpha\textbf{x})p(\textbf{y}|\textbf{x},\sigma_n)\mathbf{x}\textrm{d}\mathbf{x}}{\int_{\mathbb{R}^n}p(\alpha\tilde{\textbf{x}})p(\textbf{y}|\tilde{\textbf{x}},\sigma_n)\textrm{d}\tilde{\textbf{x}}}\,   \label{eq: MMSE condition}
\end{align}
where the first line uses that $p(\alpha\textbf{y}|\textbf{x},\alpha\sigma_n)=p(\textbf{y}|\frac{\textbf{x}}{\alpha},\sigma_n)$ for Gaussian noise, and the second line relies on a change of variables in both integrals. Inspecting~\Cref{eq: MMSE condition}, we see that choosing a positively homogeneous prior of order $k \in \mathbb{Z}$, i.e., if $p(\alpha \textbf{x}) = \alpha^k p(\textbf{x})$ for all $\textbf{x} \in \mathbb{R}^n$ and all $\alpha >0$, results in the normalization-equivariant condition in \Cref{eqn:norm_eq_def} for $\mu=0$ (this result can be directly extended to the case $\mu \neq 0$ by additionally assuming that $p(\mathbf{x})$ is invariant under additive shifts).

Unfortunately, no proper prior can be positively homogeneous, since we would get $\int_{\mathbb{R}^n} p(\mathbf{x}) \textrm{d}\mathbf{x} = \infty $. Nonetheless, as shown in the theorem below, there are well-defined priors whose MMSE estimators are close to being normalization-equivariant. Again, for clarity, we focus on the case $\mu=0$; the extension to $\mu\neq0$ follows directly.

\begin{theorem}\label{theo-scale-equiv}
Let $\operatorname{D}(\mathbf{y},\sigma)$ be the MMSE estimator to recover an unknown image $\mathbf{x} \sim p(\textbf{x})$ from $\mathbf{y} = \mathbf{x} + \sigma\boldsymbol{\eta}$ with $\boldsymbol{\eta} \sim \mathcal{N}(\mathbf{0},\mathbf{I})$. 
Assume that $p(\mathbf{x})$ admits a factorization $p(\mathbf{x}) = p_1(\mathbf{x}) p_2(\mathbf{x})$, with $p_1(\mathbf{x})$ and $p_2(\textbf{x})$ depending only on $\|\mathbf{x}\|$ and $\mathbf{x}/\|\mathbf{x}\|$ respectively, such that the normalized image $\textbf{x}/\|\mathbf{x}\|$ is independent of $\|\mathbf{x}\|$. Also assume that $U(\mathbf{x}) = -\log p(\mathbf{x})$ is twice continuously differentiable. Then, $ \exists\, \sigma^\star > 0$ such that for all $\sigma, \sigma^\prime \in (0,\sigma^\star)$, $\operatorname{D}(\mathbf{y},\sigma)$ and $\operatorname{D}(\mathbf{y},\sigma^\prime)$ are approximately equivalent by rescaling, i.e.,
$$
\|\operatorname{D}(\mathbf{y}, \sigma^\prime) - \tfrac{\sigma^\prime}{\sigma} \operatorname{D}(\tfrac{\sigma}{\sigma^\prime}\mathbf{y}, \sigma) \| \leq \epsilon\,.
$$
where the error $\epsilon > 0$ depends on the degree of non-homogeneity of $p_1$, as measured by the Fisher divergence w.r.t. to $p(\mathbf{x}|\mathbf{y},\sigma)$ between $p_1$ and the closest positively homogeneous function.
\end{theorem}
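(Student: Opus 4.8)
The plan is to compare $\operatorname{D}$ against an \emph{idealized} denoiser whose prior is exactly positively homogeneous — for which the rescaling identity holds on the nose — and then to control the discrepancy between the two denoisers by the stated Fisher divergence. First I would invoke Tweedie's formula to write the MMSE denoiser as the posterior mean, $\operatorname{D}(\mathbf{y},\sigma)=\mathbb{E}_{p(\mathbf{x}\mid\mathbf{y},\sigma)}[\mathbf{x}]=\mathbf{y}+\sigma^2\nabla_{\mathbf{y}}\log p_\sigma(\mathbf{y})$, and record the integration-by-parts identity $\nabla_{\mathbf{y}}\log p_\sigma(\mathbf{y})=\mathbb{E}_{p(\mathbf{x}\mid\mathbf{y},\sigma)}[\nabla_{\mathbf{x}}\log p(\mathbf{x})]$, which is what makes a score-difference quantity like the Fisher divergence the natural error measure. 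Let $\tilde p_1$ be a positively homogeneous radial function and set $\tilde p=\tilde p_1\,p_2$; since $p_2$ depends only on $\mathbf{x}/\|\mathbf{x}\|$ it is homogeneous of degree $0$, so $\tilde p$ is positively homogeneous and, by the change-of-variables computation preceding the theorem, its MMSE denoiser $\tilde{\operatorname{D}}$ satisfies the \emph{exact} relation $\tilde{\operatorname{D}}(\mathbf{y},\sigma')=\tfrac{\sigma'}{\sigma}\tilde{\operatorname{D}}(\tfrac{\sigma}{\sigma'}\mathbf{y},\sigma)$. A triangle inequality then reduces the theorem to bounding $\|\operatorname{D}(\mathbf{y},\tau)-\tilde{\operatorname{D}}(\mathbf{y},\tau)\|$ at the two noise levels $\tau\in\{\sigma,\sigma'\}$.

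For the core estimate I would set $\phi:=\log p_1-\log\tilde p_1$, so that $\nabla\log p-\nabla\log\tilde p=\nabla\phi$ with the angular factor $p_2$ cancelling. Interpolating the priors along $\log p_s=\log\tilde p+s\phi$ for $s\in[0,1]$ and differentiating the posterior mean yields the exact representation $\operatorname{D}(\mathbf{y},\tau)-\tilde{\operatorname{D}}(\mathbf{y},\tau)=\int_0^1\mathrm{Cov}_{q_s}[\mathbf{x},\phi]\,\mathrm{d}s$, where $q_s\propto p_s\,\mathcal{N}(\mathbf{y};\cdot,\tau^2\mathbf{I})$ is the interpolated posterior. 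I would then control each covariance by Cauchy--Schwarz together with a Poincaré inequality for $q_s$, $\|\mathrm{Cov}_{q_s}[\mathbf{x},\phi]\|\le\sqrt{\Tr\,\mathrm{Var}_{q_s}[\mathbf{x}]}\,\sqrt{\mathrm{Var}_{q_s}[\phi]}\lesssim\tau^2\sqrt{\mathbb{E}_{q_s}\|\nabla\phi\|^2}$, using that a posterior with smooth log-prior and Gaussian likelihood of variance $\tau^2$ concentrates with covariance of order $\tau^2$ and admits a Poincaré constant of the same order. Recognizing $\mathbb{E}_{q_s}\|\nabla\phi\|^2=\mathbb{E}_{q_s}\|\nabla\log p_1-\nabla\log\tilde p_1\|^2$ as the Fisher divergence between $p_1$ and $\tilde p_1$, and choosing $\tilde p_1$ to minimize it, gives $\|\operatorname{D}(\mathbf{y},\tau)-\tilde{\operatorname{D}}(\mathbf{y},\tau)\|\lesssim\tau^2\sqrt{\mathcal{F}}$ with $\mathcal{F}$ the stated degree of non-homogeneity.

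Because the factorization forces both $\nabla\log p_1$ and $\nabla\log\tilde p_1$ to be purely radial, the minimization defining the closest homogeneous function collapses to a scalar least-squares problem over the homogeneity degree $k$ — namely minimizing $\mathbb{E}_{p(\mathbf{x}\mid\mathbf{y},\tau)}[(\partial_r\log p_1-k/\|\mathbf{x}\|)^2]$ — which both guarantees the minimizer exists and endows $\mathcal{F}$ with the interpretation of a residual measuring non-homogeneity. Combining the two level-wise bounds through the triangle inequality, and collecting the prefactors (the term at $\tau=\sigma'$ and the term at $\tau=\sigma$ carrying the extra factor $\sigma'/\sigma$ from the rescaling), produces the claimed inequality with $\epsilon\lesssim(\sigma'^2+\sigma'\sigma)\sqrt{\mathcal{F}}$, which is uniformly small for $\sigma,\sigma'\in(0,\sigma^\star)$.

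I expect the main obstacle to be the covariance-to-Fisher-divergence step: establishing, uniformly in $s$ and for all $\tau<\sigma^\star$, that the interpolated posterior $q_s$ genuinely concentrates with covariance $O(\tau^2)$ and satisfies a Poincaré inequality with constant $O(\tau^2)$. This is precisely where the hypothesis that $U=-\log p$ is twice continuously differentiable enters: it licenses a local quadratic (Laplace) approximation in which the likelihood Hessian $\tau^{-2}\mathbf{I}$ dominates $\nabla^2 U$, so that $\sigma^\star$ is characterized as the scale below which this domination — and hence the concentration estimate — holds. A secondary technical point is bounding the interpolation remainder uniformly in $s$, which requires $\phi$, and thus the Fisher divergence, to stay controlled along the homotopy.
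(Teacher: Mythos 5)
Your proposal shares the paper's overall skeleton exactly: introduce a surrogate prior $\tilde p=\tilde p_1 p_2$ with $\tilde p_1$ positively homogeneous (so that the change-of-variables identity in \Cref{eq: MMSE condition} makes the surrogate denoiser $\tilde{\operatorname{D}}$ satisfy the rescaling relation \emph{exactly}), then triangle-inequality down to bounding $\|\operatorname{D}-\tilde{\operatorname{D}}\|$ at the two noise levels, with $\sigma^\star$ arising because the $1/\tau^2$-strong concavity of the Gaussian log-likelihood eventually dominates the (bounded, away from the origin) Hessian of the log-prior. Where you genuinely diverge is the key quantitative lemma. The paper controls $\|\operatorname{D}(\mathbf{y},\tau)-\tilde{\operatorname{D}}(\mathbf{y},\tau)\|$ by bounding the 2-Wasserstein distance between the true and surrogate posteriors via the Fisher divergence, citing \cite[Theorem 5.3]{huggins2018practicalboundserrorbayesian}, and then using that $\mathcal{W}_2$ controls differences of means; the constant $\psi$ (inverse strong-log-concavity constant) absorbs the $\tau^2$ dependence. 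You instead derive the bound directly by thermodynamic integration, $\operatorname{D}-\tilde{\operatorname{D}}=\int_0^1\Cov_{q_s}[\mathbf{x},\phi]\,\mathrm{d}s$ with $\phi=\log p_1-\log\tilde p_1$, followed by Cauchy--Schwarz and a Poincar\'e inequality for the tilted posteriors $q_s$. This buys a self-contained argument with the explicit $\tau^2\sqrt{\mathcal{F}}$ scaling made visible (consistent with the paper's $\psi\kappa_\tau$), plus the nice observation that the minimization over homogeneous radial surrogates reduces to a scalar least-squares over the degree $k$; the cost is two technical debts the paper's citation outsources: (i) your Fisher-type quantity $\mathbb{E}_{q_s}\|\nabla\phi\|^2$ is taken under the \emph{interpolated} posterior, so you must relate it, uniformly in $s\in[0,1]$, to the divergence w.r.t.\ $p(\mathbf{x}|\mathbf{y},\sigma)$ appearing in the statement (and verify $q_s$ is normalizable along the homotopy); and (ii) for priors like $\|\mathbf{x}\|^{-\beta}$ the log-Hessian blows up at the origin, so strong log-concavity of $q_s$ holds only outside a compact set, and your $O(\tau^2)$ Poincar\'e constant needs a perturbation argument (e.g.\ Holley--Stroock) rather than plain Brascamp--Lieb --- the paper faces the same issue and handles it through the ``strongly concave outside a compact set'' hypothesis of the cited Wasserstein bound. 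You correctly flag both points as the main obstacles, so I consider the proposal sound, with those steps being where the real work lies.
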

\begin{proof}
Let $\mathcal{P}(\mathbb{R}^n)$ be the class of functions on $\mathbb{R}^n$ that are positively homogeneous and whose logarithm is twice continuously differentiable with Lipschitz continuous gradient. 
Assume that $p(\textbf{x})$ admits a factorization $p(\textbf{x}) = p_1(\textbf{x}) p_2(\textbf{x})$, with $p_1(\textbf{x})$ and $p_2(\textbf{x})$ depending only on $\|\textbf{x}\|$ and $\textbf{x}/\|\textbf{x}\|$ respectively, and let $\tilde{p}_1$ be the function in $\mathcal{P}(\mathbb{R}^n)$ that is closest to $p_1$ in the sense of the Fisher divergence w.r.t. the posterior ${p}(\textbf{x}|\textbf{y},\sigma)\propto {p}(\textbf{x})p(\textbf{y}|\textbf{x},\sigma)$, i.e., 
$$
\tilde{p}_1 = \argmin_{q\in\mathcal{P}(\mathbb{R}^n)}\, \int_{\mathbb{R}^n}\|\nabla \log q(\textbf{x})-\nabla \log p_1(\textbf{x})\|^2 \,p(\textbf{x}|\textbf{y},\sigma) \textrm{d}\textbf{x}\,.
$$

Consider the approximation $\tilde{p}(\textbf{x}) \propto \tilde{p}_1(\textbf{x}) p_2(\textbf{x})$ of $p$, obtained by replacing the correct marginal $p_1$ by $\tilde{p}_1$, and denote by $\tilde{p}(\textbf{x}|\textbf{y},\sigma)\propto \tilde{p}(\textbf{x})p(\textbf{y}|\textbf{x},\sigma)$ the associated posterior distribution. We view  $\tilde{p}$ as an operational prior that may be improper, but we assume that $\tilde{p}(\textbf{x}|\textbf{y},\sigma)$ is well defined. Moreover, we denote by $\kappa_{\sigma}$ the Fisher divergence between the posteriors ${p}(\textbf{x}|\textbf{y},\sigma)$ and $\tilde{p}(\textbf{x}|\textbf{y},\sigma)$, given by
\begin{equation*}
\begin{split}
    \kappa_{\sigma} 
    &=\int_{\mathbb{R}^d}\|\nabla \log {p}(\mathbf{x}|\mathbf{y},\sigma)-\nabla \log \tilde{p}(\mathbf{x}|\mathbf{y},\sigma)\|^2 \,{\pi}(\mathbf{x}|\mathbf{y},\sigma) d \mathbf{x}\,,\\
    &=\int_{\mathbb{R}^d}\|\nabla \log {p_1}(\mathbf{x})-\nabla \log \tilde{p}_1(\mathbf{x})\|^2 \,{\pi}(\mathbf{x}|\mathbf{y},\sigma) d \mathbf{x}\,,
\end{split}
\end{equation*}
where  we have used the factorization property of $p$ and $\tilde{p}$.

Furthermore, because $\mathbf{x} \mapsto \log p(\mathbf{y}|\mathbf{x},\sigma)$ is $1/\sigma^2$-strongly concave, and the Hessian of $\log \tilde{p}$ is bounded, there exists some $\sigma^\star$ such that for all $\sigma \leq \sigma^\star$ the approximation $\log\tilde{p}(\vx|\vy,\sigma)$ is strongly concave outside some compact set (i.e., for some constants $K>0$ and $R \geq 0, \nabla^2 \log \tilde{p}(\mathbf{x}|\mathbf{y}) \succeq K \mathbf{I}$ for all $\|\mathbf{x}\| \geq R$). From ~\cite[Theorem 5.3]{huggins2018practicalboundserrorbayesian}, this implies that for any $\sigma \leq \sigma^\star$, the 2-Wasserstein distance between ${p}(\textbf{x}|\textbf{y},\sigma)$ and $\tilde{p}(\textbf{x}|\textbf{y},\sigma)$ is bounded as
$$
\mathcal{W}_2\left({p}(\textbf{x}|\textbf{y},\sigma), \tilde{p}(\textbf{x}|\textbf{y},\sigma)\right) \leq \psi \kappa_{\sigma}\,,
$$
where $\psi > 0$ depends on $\tilde{p}(\textbf{x}|\textbf{y},\sigma)$, but is independent of ${p}(\textbf{x}|\textbf{y},\sigma)$. For example, in the specific case where $\tilde{p}(\textbf{x}|\textbf{y},\sigma)$ is strongly log-concave, $\psi$ is the inverse of the strong log-concavity constant. 

Following on from this, we denote by $\tilde{\operatorname{D}}(\mathbf{y},\sigma)$ the MMSE denoiser associated with $\tilde{p}(\textbf{x}|\textbf{y},\sigma)$ and use \Cref{eq: MMSE condition} to show that $\tilde{\operatorname{D}}(\mathbf{y},\sigma)$ verifies the desired rescaling property  
$
\tilde{\operatorname{D}}(\mathbf{y},\sigma^\prime) = \tfrac{\sigma^\prime}{\sigma} \tilde{\operatorname{D}}(\tfrac{\sigma}{\sigma^\prime}\mathbf{y},{\sigma})
$.
Lastly, because $\mathcal{W}_2$ bounds the difference in the expectation of random variables, we have
\begin{equation*}
\begin{split} 
&\|{\operatorname{D}}(\mathbf{y},{\sigma^\prime}) - \tilde{\operatorname{D}}(\mathbf{y},{\sigma^\prime})\| = \|{\operatorname{D}}(\mathbf{y},{\sigma^\prime}) - \tfrac{\sigma^\prime}{\sigma} \tilde{\operatorname{D}}(\tfrac{\sigma}{\sigma^\prime}\mathbf{y},{\sigma})\|\leq \psi \kappa_{\sigma^\prime}\,,\\
&\|\tfrac{\sigma^\prime}{\sigma}{\operatorname{D}}(\tfrac{\sigma}{\sigma^\prime}\mathbf{y},{\sigma}) - \tfrac{\sigma^\prime}{\sigma}\tilde{\operatorname{D}}(\tfrac{\sigma}{\sigma^\prime}\mathbf{y},{\sigma}) \| \leq \tfrac{\sigma^\prime}{\sigma}\psi \kappa_{\sigma}\,,\\
\end{split}
\end{equation*}
and therefore, 
\begin{equation*}
\begin{split} 
\|{\operatorname{D}}(\mathbf{y},{\sigma^\prime}) - \tfrac{\sigma^\prime}{\sigma}{\operatorname{D}}(\tfrac{\sigma}{\sigma^\prime}\mathbf{y},{\sigma})\| \leq \epsilon\,,\\
\end{split}
\end{equation*}
with $\epsilon^2 = \psi^2 \kappa^2_{\sigma^\prime} +(\tfrac{\sigma^\prime}{\sigma})^2\psi^2 \kappa_{\sigma}^2$, concluding the proof.
\end{proof}

To develop an intuition for \Cref{theo-scale-equiv}, it is helpful to consider the following. For most images, the magnitude $\|\mathbf{x}\|$ conveys negligible information about the normalized image $\mathbf{x} / \|\mathbf{x}\|$. Indeed, recovering $\mathbf{x}$ from 
the total energy $\|\mathbf{x}\|^2$ alone is typically impossible. Therefore, the assumption that $p(\mathbf{x})$ admits the proposed factorization $p(\mathbf{x})=p_1(\mathbf{x})p_2(\mathbf{x})$ is practically justified. In addition, it has been repeatedly empirically observed that the spectral properties of images exhibit power-law statistics related to self-similarity (see, e.g., \citep{ruderman1997} and \Cref{fig:patch_norm} in Appendix A). This motivates the consideration of power-law models of the form $p_1(\mathbf{x})=q(\mathbf{x})L(\mathbf{x})$ where $q$ is positively homogeneous, e.g., $q(\mathbf{x}) = \|\mathbf{x}\|^{-\beta}$, and $L(\mathbf{x})$ is some slow-varying function that predominantly influences $p_1(\mathbf{x})$ near the origin. In this case, minimizing the considered Fisher divergence leads to $\tilde{p}_1 = q$, and the remaining error, $\kappa_{\sigma_n} = \int_{\mathbb{R}^n} L(\mathbf{x})p(\mathbf{x}|\mathbf{y},\sigma)\textrm{d}\mathbf{x}$, is small if $L(\mathbf{x})$ is small in regions with high posterior mass. To conclude, because training by score-matching is equivalent to minimizing the Fisher divergence w.r.t. ${p}(\textbf{x}|\textbf{y},\sigma)$, we view performing score-matching subject to the considered normalization property as a mechanism for learning the approximate MMSE denoiser $\tilde{\operatorname{\operatorname{D}}}$ rather than ${\operatorname{D}}$, and we expect this denoiser trained directly from noisy data to generalize robustly to lower noise levels provided $\sigma_n\leq \sigma^\star$ in~\Cref{theo-scale-equiv}, as evidenced by the experiments reported in \Cref{sec:denoising_ex}.

\section{Denoising Experiments}

\label{sec:denoising_ex}
\paragraph{Baselines}
We evaluate the performance of the normalization-equivariance strategy below the measurement noise level by comparing to \textbf{(1)} a denoiser trained in supervised fashion at all noise levels \textbf{(2)} a self-supervised denoiser trained using SURE only at the measurement noise level (SURE) to understand how seeing only the measurement noise level generalizes performance ~\citep{metzler2020unsupervisedlearningsteinsunbiased}, and \textbf{(3)} a recently proposed method for self-supervised denoising below the measurement noise level using consistency (CD)~\citep{daras2024imageworth}. 

\begin{figure}[t]
    \centering
    \includegraphics[width=0.99\linewidth]{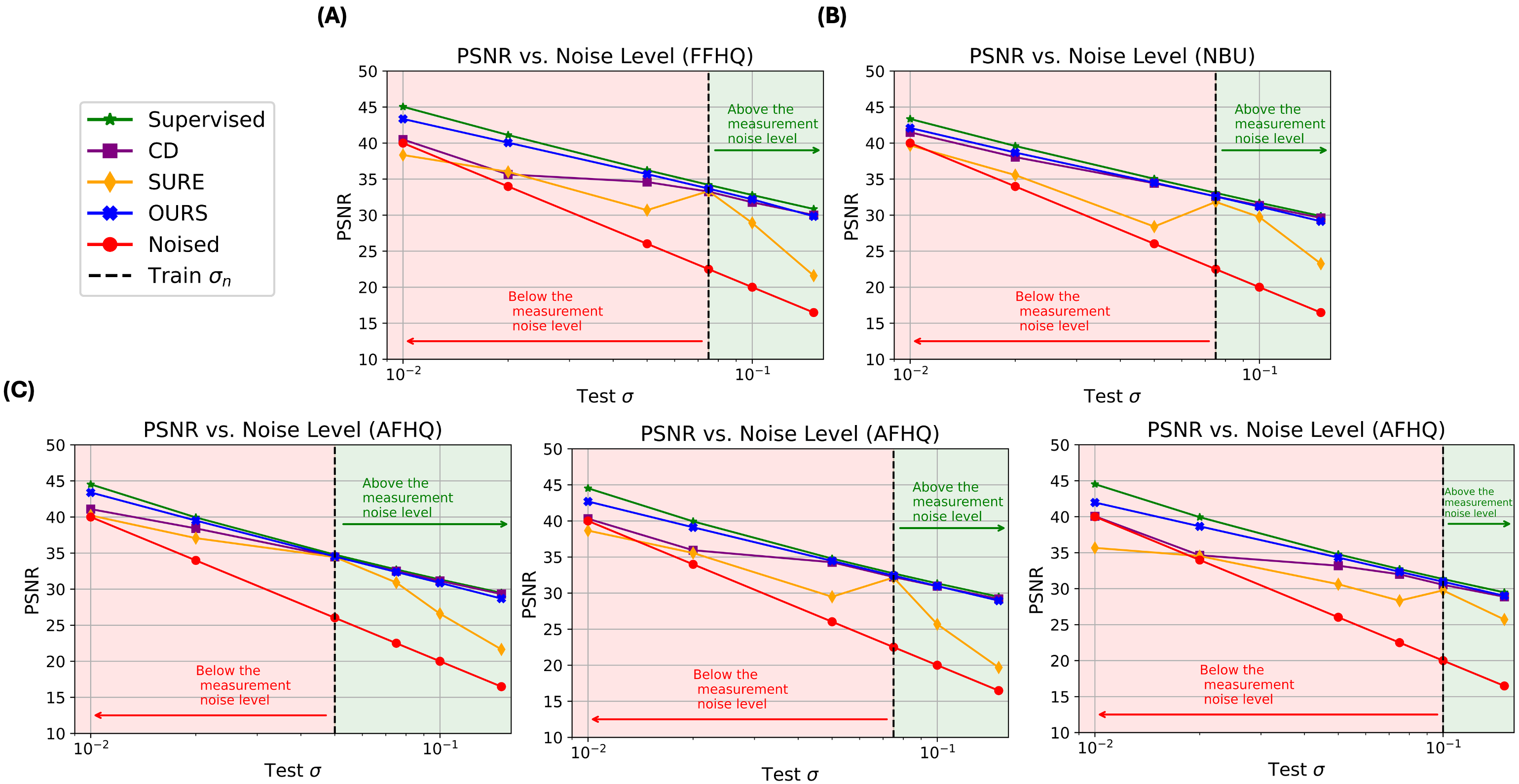}
    \caption{Performance of various one-step denoisers on (A) FFHQ (B) NBU and (C) AFHQ dataset.}
    \label{fig:denoise_metrics}
\end{figure}

\begin{figure}[t]
    \centering
    \includegraphics[width=0.99\linewidth]{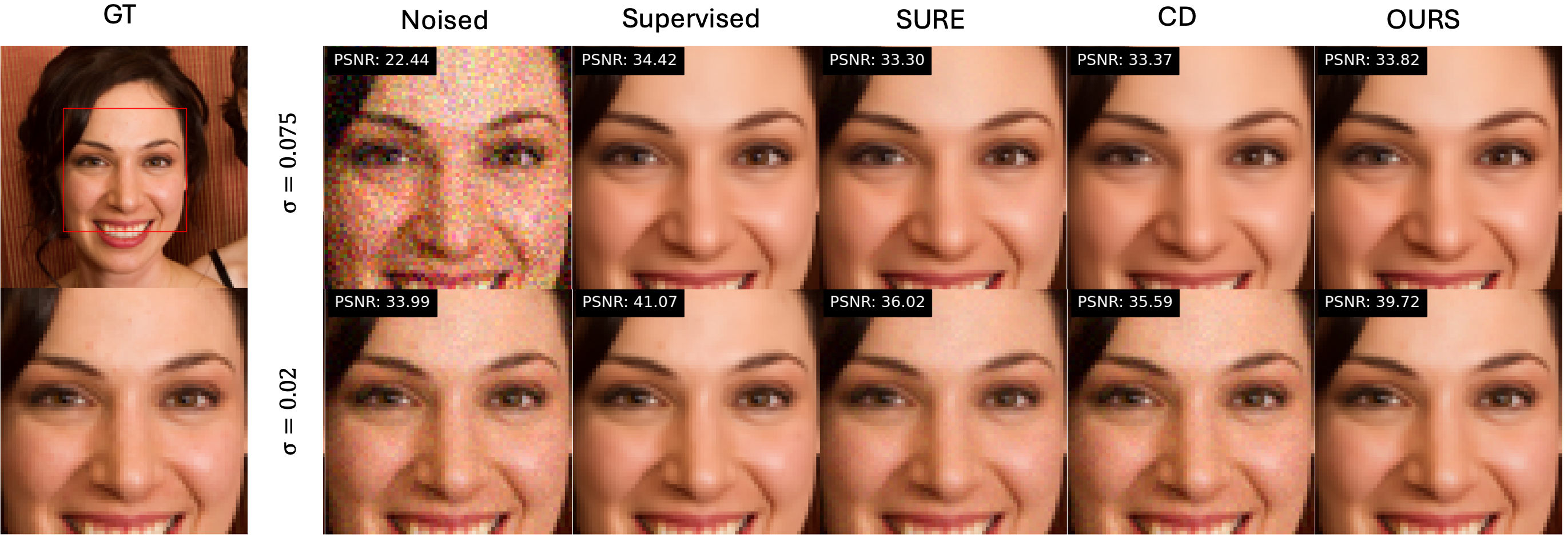}
    \caption{Denoising example on FFHQ dataset at test noise levels $\sigma=0.075$ (top) and $\sigma=0.02$ (bottom). All models, except for supervised, were trained on only noisy data with $\sigma_n = 0.075$.}
    \label{fig:ffhq_denoise_example}
\end{figure}

\paragraph{Training Details}
We compare the performance of all denoisers on the task of denoising FFHQ ($128\times128$ pixels) training data that has been corrupted with $\sigma_n = 0.075$. On the AFHQ ($128\times128$ pixels)  dataset, we perform experiments for varying noise levels $\sigma_n \in \{0.05, 0.075, 0.10\}$. Finally, we use a panchromatic satellite imaging dataset (NBU)~\citep{Meng2020ALB} ($128\times128$ pixels) corrupted with $\sigma_n=0.075$. To investigate the performance of our technique with varying dataset sizes, we create various training set sizes of $N \in \{500, 1000, 5000, 15000\}$ for $\sigma_n=0.05$ using the AFHQ dataset.
 We emphasize that for all datasets, we simulated noise only once for each image in the training dataset to most accurately emulate a real-world setting. Our model architecture is the UNet denoiser from~\cite{song2021scorebasedgenerativemodelingstochastic} with $55$M parameters. 

\paragraph{Sampler Details}
For experiments on diffusion sampling, after training each model, we run \Cref{alg:inference} in Appendix A with $\mathbf{A}=\mathbf{I}$ from the measurement noise level to $\sigma_{\min}=0.01$ (for experiments with $\sigma_n \in \{0.05, 0.075\}$) or $\sigma_{\min}=0.02$ (for experiments with $\sigma_n=0.10$). We use  $K=25$ steps for all sampling methods, and the timestep schedule following~\cite{karras2022elucidatingdesignspacediffusionbased}:
\begin{equation}
    \sigma_i = (\sigma_{\max}^{\frac{1}{\gamma}} + \frac{i}{N-1}(\sigma_{\min}^{\frac{1}{\gamma}}-\sigma_{\max}^{\frac{1}{\gamma}}))^{\gamma},
\end{equation}
with $\gamma = 7$ and $\sigma_{\max}=\sigma_n$.

We report all metrics over validation sets of $1000$, $1500$, $1800$ images for FFHQ, AFHQ, and NBU datasets, respectively. To assess distortion in the reconstructed images, we report PSNR and SSIM~\citep{SSIM}, while for perceptual quality, we report LPIPS and FID.

\paragraph{Denoising Performance}
\Cref{fig:denoise_metrics} shows how each denoiser performs at noise levels different than the measurement noise level available at training time. Both SURE and CD perform poorly at noise levels below the measurement noise level. In contrast, our approach incurs only a small performance reduction with respect to the supervised case at noise levels below the training data noise, providing strong empirical support for the validity of the normalization invariance assumption. Moreover, for highly noisy training data, the gap between the proposed method and supervised learning at lower noise increases, but it remains the best method across self-supervised techniques. \Cref{fig:ffhq_denoise_example} (top) show examples of how all self-supervised denoisers perform well at the measurement noise level compared to the supervised approach. However, we observe in \Cref{fig:ffhq_denoise_example} (bottom) that, at noise levels below the measurement noise level, only our technique maintains a good performance compared to the supervised strategy.

\paragraph{Posterior Sampling Performance}
We assess each denoiser's sampling ability compared to a supervised denoiser in \Cref{tab:denoise_metrics}. Here, MMSE (Self Sup.) is taken to be the one-step denoiser from our normalization-equivariant SURE denoiser, as it showed better PSNR metrics than SURE alone in the one-step denoising task above (i.e., better approximate self-supervised MMSE estimator). The sampler based on our self-supervised denoiser reports better perceptual quality metrics than all methods except for the supervised denoiser. As expected, all sampling approaches report worse PSNR and SSIM than their respective MMSE (one-step) solvers, which is in line with the known perception-distortion tradeoff~\citep{Blau_2018}. In \Cref{fig:ffhq_sample_ex} we show visual examples comparing each approach. Additionally, in Appendix A we show \Cref{fig:afhq_stack} which are examples of each method at various training + inference noise levels on the AFHQ dataset along with each method's radial spectra for the corresponding reconstructions in \Cref{fig:afhq_spectrum}. All one-step solvers appear to reduce high frequency components resulting in a visually smoother image and CD retains noise in the sample leading to higher frequency components. Our method, however, produces spectra noticeably more closely aligned with the ground truth images.

\begin{table}
  \caption{Posterior sampling metrics on FFHQ, NBU, and AFHQ $128\times 128$ where the models are trained and complete inference on data with additive noise shown in the 2nd column.}
  \label{tab:denoise_metrics}
\resizebox{\linewidth}{!}{
  \begin{tabular}{|c c c c c c c c c|}
    \toprule
    Dataset & $\sigma_n$ & Solver & Sampler & Self Sup. &PSNR $(\uparrow)$ & SSIM ($\uparrow$) & LPIPS ($\downarrow$) & FID ($\downarrow$) \\
    \midrule
    \multirow{5}{*}{FFHQ} &
    \multirow{5}{*}{$0.075$} 
                               & \multicolumn{1}{l}{MMSE (Self Sup.)} & \multicolumn{1}{l}{ } &
                               \multicolumn{1}{l}{\checkmark} &\multicolumn{1}{l}{\underline{$33.68$}} & \multicolumn{1}{l}{\underline{$0.937$}} & \multicolumn{1}{l}{$0.024$} & \multicolumn{1}{l}{$29.82$} \\
                                & & \multicolumn{1}{l}{OURS} &
                                 \multicolumn{1}{l}{\checkmark} &
                                 \multicolumn{1}{l}{\checkmark} &\multicolumn{1}{l}{$32.76$} & \multicolumn{1}{l}{$0.921$} & \multicolumn{1}{l}{\underline{$0.015$}} & \multicolumn{1}{l}{\underline{$22.57$}} \\
                                & & \multicolumn{1}{l}{CD~\citep{daras2024imageworth}} &
                                 \multicolumn{1}{l}{\checkmark} &
                                 \multicolumn{1}{l}{\checkmark} &\multicolumn{1}{l}{$28.92$} & \multicolumn{1}{l}{$0.756$} & \multicolumn{1}{l}{$0.057$} & \multicolumn{1}{l}{$59.37$}\\
                                & &  \multicolumn{1}{l}{MMSE (Sup.)~\citep{karras2022elucidatingdesignspacediffusionbased}} & 
                                 \multicolumn{1}{l}{ } &
                                 \multicolumn{1}{l}{ } &\multicolumn{1}{l}{$\mathbf{34.20}$} & \multicolumn{1}{l}{$\mathbf{0.944}$} & \multicolumn{1}{l}{$0.023$} & \multicolumn{1}{l}{$32.79$} \\
                                & &  \multicolumn{1}{l}{EDM (Sup.)~\citep{karras2022elucidatingdesignspacediffusionbased}} &
                                 \multicolumn{1}{l}{\checkmark} &
                                 \multicolumn{1}{l}{ } &\multicolumn{1}{l}{$33.06$} & \multicolumn{1}{l}{$0.920$} & \multicolumn{1}{l}{$\mathbf{0.012}$} & \multicolumn{1}{l}{$\mathbf{19.91}$} \\
    \midrule
    \multirow{5}{*}{NBU} &
        \multirow{5}{*}{$0.075$} 
                               & \multicolumn{1}{l}{MMSE (Self Sup.)} & \multicolumn{1}{l}{ } &
                               \multicolumn{1}{l}{\checkmark} &\multicolumn{1}{l}{\underline{$32.58$}} & \multicolumn{1}{l}{\underline{$0.863$}} & \multicolumn{1}{l}{$0.111$} & \multicolumn{1}{l}{$79.20$} \\
                                & & \multicolumn{1}{l}{OURS} &
                                 \multicolumn{1}{l}{\checkmark} &
                                 \multicolumn{1}{l}{\checkmark} &\multicolumn{1}{l}{$32.05$} & \multicolumn{1}{l}{$0.851$} & \multicolumn{1}{l}{$0.085$} & \multicolumn{1}{l}{\underline{$34.01$}} \\
                                & & \multicolumn{1}{l}{CD~\citep{daras2024imageworth}} &
                                 \multicolumn{1}{l}{\checkmark} &
                                 \multicolumn{1}{l}{\checkmark} &\multicolumn{1}{l}{$32.02$} & \multicolumn{1}{l}{$0.836$} & \multicolumn{1}{l}{\underline{$0.076$}} & \multicolumn{1}{l}{$53.75$}\\
                                 & &  \multicolumn{1}{l}{MMSE (Sup.)~\citep{karras2022elucidatingdesignspacediffusionbased}} & 
                                 \multicolumn{1}{l}{ } &
                                 \multicolumn{1}{l}{ } &\multicolumn{1}{l}{$\mathbf{33.08}$} & \multicolumn{1}{l}{$\mathbf{0.877}$} & \multicolumn{1}{l}{$0.110$} & \multicolumn{1}{l}{$103.18$} \\
                                & &  \multicolumn{1}{l}{EDM (Sup.)~\citep{karras2022elucidatingdesignspacediffusionbased}} &
                                 \multicolumn{1}{l}{\checkmark} &
                                 \multicolumn{1}{l}{ } &\multicolumn{1}{l}{$32.02$} & \multicolumn{1}{l}{$0.841$} & \multicolumn{1}{l}{$\mathbf{0.057}$} & \multicolumn{1}{l}{$\mathbf{18.21}$} \\
    \midrule
    \multirow{5}{*}{AFHQ} &
    \multirow{5}{*}{$0.05$} 
                               & \multicolumn{1}{l}{MMSE (Self Sup.)} & \multicolumn{1}{l}{ } &
                               \multicolumn{1}{l}{\checkmark} &\multicolumn{1}{l}{\underline{$34.52$}} & \multicolumn{1}{l}{\underline{$0.949$}} & \multicolumn{1}{l}{$0.020$} & \multicolumn{1}{l}{$10.09$} \\
                               &  & \multicolumn{1}{l}{OURS} &
                                 \multicolumn{1}{l}{\checkmark} &
                                 \multicolumn{1}{l}{\checkmark} &\multicolumn{1}{l}{$33.89$} & \multicolumn{1}{l}{$0.942$} & \multicolumn{1}{l}{\underline{$0.011$}} & \multicolumn{1}{l}{\underline{$5.08$}} \\
                                & & \multicolumn{1}{l}{CD~\citep{daras2024imageworth}} &
                                 \multicolumn{1}{l}{\checkmark} &
                                 \multicolumn{1}{l}{\checkmark} &\multicolumn{1}{l}{$32.48$} & \multicolumn{1}{l}{$0.903$} & \multicolumn{1}{l}{$0.015$} & \multicolumn{1}{l}{$10.74$}\\
                                & &  \multicolumn{1}{l}{MMSE (Sup.)~\citep{karras2022elucidatingdesignspacediffusionbased}} & 
                                 \multicolumn{1}{l}{ } &
                                 \multicolumn{1}{l}{ } &\multicolumn{1}{l}{$\mathbf{34.77}$} & \multicolumn{1}{l}{$\mathbf{0.952}$} & \multicolumn{1}{l}{$0.019$} & \multicolumn{1}{l}{$9.83$} \\
                                & &  \multicolumn{1}{l}{EDM (Sup.)~\citep{karras2022elucidatingdesignspacediffusionbased}} &
                                 \multicolumn{1}{l}{\checkmark} &
                                 \multicolumn{1}{l}{ } &\multicolumn{1}{l}{$34.01$} & \multicolumn{1}{l}{$0.942$} & \multicolumn{1}{l}{$\mathbf{0.010}$} & \multicolumn{1}{l}{$\mathbf{3.62}$} \\
    \midrule
    \multirow{5}{*}{AFHQ} & \multirow{5}{*}{$0.075$} 
                              & \multicolumn{1}{l}{MMSE (Self Sup.)} & \multicolumn{1}{l}{ } &
                               \multicolumn{1}{l}{\checkmark} &\multicolumn{1}{l}{\underline{$32.41$}} & \multicolumn{1}{l}{\underline{$0.923$}} & \multicolumn{1}{l}{$0.034$} & \multicolumn{1}{l}{$11.56$} \\
                                & & \multicolumn{1}{l}{OURS} &
                                 \multicolumn{1}{l}{\checkmark} &
                                 \multicolumn{1}{l}{\checkmark} &\multicolumn{1}{l}{$31.71$} & \multicolumn{1}{l}{$0.911$} & \multicolumn{1}{l}{\underline{$0.021$}} & \multicolumn{1}{l}{\underline{$7.12$}} \\
                               &  & \multicolumn{1}{l}{CD~\citep{daras2024imageworth}} &
                                 \multicolumn{1}{l}{\checkmark} &
                                 \multicolumn{1}{l}{\checkmark} &\multicolumn{1}{l}{$29.44$} & \multicolumn{1}{l}{$0.818$} & \multicolumn{1}{l}{$0.034$} & \multicolumn{1}{l}{$17.08$}\\
                               &  &  \multicolumn{1}{l}{MMSE (Sup.)~\citep{karras2022elucidatingdesignspacediffusionbased}} & 
                                 \multicolumn{1}{l}{ } &
                                 \multicolumn{1}{l}{ } &\multicolumn{1}{l}{$\mathbf{32.72}$} & \multicolumn{1}{l}{$\mathbf{0.928}$} & \multicolumn{1}{l}{$0.034$} & \multicolumn{1}{l}{$12.55$} \\
                               &  &  \multicolumn{1}{l}{EDM (Sup.)~\citep{karras2022elucidatingdesignspacediffusionbased}} &
                                 \multicolumn{1}{l}{\checkmark} &
                                 \multicolumn{1}{l}{ } &\multicolumn{1}{l}{$31.80$} & \multicolumn{1}{l}{$0.910$} & \multicolumn{1}{l}{$\mathbf{0.017}$} & \multicolumn{1}{l}{$\mathbf{5.66}$} \\
    \midrule
     \multirow{5}{*}{AFHQ} &   \multirow{5}{*}{$0.10$} 
                               & \multicolumn{1}{l}{MMSE (Self Sup.)} & \multicolumn{1}{l}{ } &
                               \multicolumn{1}{l}{\checkmark} &\multicolumn{1}{l}{\underline{$30.96$}} & \multicolumn{1}{l}{\underline{$0.898$}} & \multicolumn{1}{l}{$0.053$} & \multicolumn{1}{l}{$13.80$} \\
                               &  & \multicolumn{1}{l}{OURS} &
                                 \multicolumn{1}{l}{\checkmark} &
                                 \multicolumn{1}{l}{\checkmark} &\multicolumn{1}{l}{$30.38$} & \multicolumn{1}{l}{$0.888$} & \multicolumn{1}{l}{\underline{$0.034$}} & \multicolumn{1}{l}{\underline{$8.88$}} \\
                                & & \multicolumn{1}{l}{CD~\citep{daras2024imageworth}} &
                                 \multicolumn{1}{l}{\checkmark} &
                                 \multicolumn{1}{l}{\checkmark} &\multicolumn{1}{l}{$27.01$} & \multicolumn{1}{l}{$0.732$} & \multicolumn{1}{l}{$0.064$} & \multicolumn{1}{l}{$20.65$}\\
                                & &  \multicolumn{1}{l}{MMSE (Sup.)~\citep{karras2022elucidatingdesignspacediffusionbased}} & 
                                 \multicolumn{1}{l}{ } &
                                 \multicolumn{1}{l}{ } &\multicolumn{1}{l}{$\mathbf{31.34}$} & \multicolumn{1}{l}{$\mathbf{0.906}$} & \multicolumn{1}{l}{$0.049$} & \multicolumn{1}{l}{$14.37$} \\
                               &  &  \multicolumn{1}{l}{EDM (Sup.)~\citep{karras2022elucidatingdesignspacediffusionbased}} &
                                 \multicolumn{1}{l}{\checkmark} &
                                 \multicolumn{1}{l}{ } &\multicolumn{1}{l}{$30.26$} & \multicolumn{1}{l}{$0.879$} & \multicolumn{1}{l}{$\mathbf{0.024}$} & \multicolumn{1}{l}{$\mathbf{7.82}$} \\
    \bottomrule
  \end{tabular}}
\end{table}

\begin{figure}[!h]
    \centering
    \includegraphics[width=0.99\linewidth]{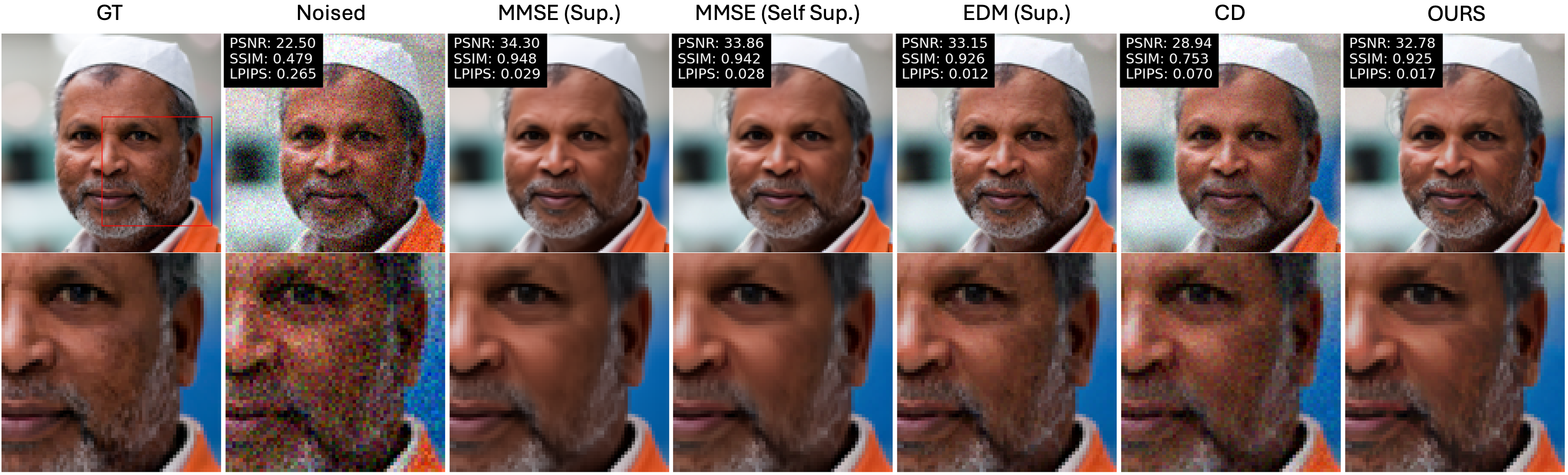}
    \caption{Example of various denoisers using diffusion sampling (except for MMSE (Self Sup.) and MMSE (Sup.) columns) on FFHQ dataset with training and test noise level $\sigma_n =0.075$.}
    \label{fig:ffhq_sample_ex}
\end{figure}


\paragraph{Sample Complexity}
\Cref{fig:sample_cplx} shows the performance of our denoiser trained on the AFHQ dataset as a function of the number of (noisy) samples available for training. We plot the difference between the average MSE loss for each denoiser and the MMSE performance (approximated here by supervised learning on the largest dataset size). Previous work \citep{daras2024imageworth} hypothesized that learning below the measurement noise level requires a potentially prohibitive amount of noisy training data. We find, on the contrary, that the mean squared error scales approximately as $\sigma_t^2/\sqrt{N}$ for noise levels above ($\sigma_t\geq 0.05$) and also below ($\sigma_t< 0.05$) the measurement noise, which is the complexity typically observed in supervised learning ~\citep[Ch. 6]{hardt2022patterns}.




\section{Extension to Linear Inverse Problems}
We now extend our method to more general linear inverse problems. In particular, we present a new self-supervised loss for learning DMs from a dataset of noisy measurements, all observed by the same forward operator at the same measurement noise level, $\{ \mathbf{y}_i = \mathbf{A}\mathbf{x}_i + \sigma_n \boldsymbol{\eta}_i \}_{i=1}^N$, where $\mathbf{A}$ is rank deficient and $\boldsymbol{\eta} \sim \mathcal{N}(\boldsymbol{0},\mathbf{I})$.

Following the equivariant imaging (EI) approach, we assume that $p(\textbf{x})$ is invariant to a group of transformations $\{\mathbf{T}_g \in \mathbb{R}^{n\times n} \}_{g=1}^G$ such as translations, flips and/or rotations, that is, for any image $\mathbf{x} \sim p(\mathbf{x})$, $\mathbf{T}_g\mathbf{x}$ follows the same distribution. 
We propose the following loss to approximately learn $\operatorname{D}_{\boldsymbol{\theta}}(\mathbf{y},\sigma)\approx \mathbb{E}\{\mathbf{x}|\mathbf{A}\mathbf{x} + \sigma \boldsymbol{\eta}\}$ in a self-supervised fashion for all noise levels $\sigma \in (0,\sigma_n]$:
\begin{align}
 \mathcal{L}(\mathbf{y},\mathbf{{\boldsymbol{\theta}}}) = & \; \mathbb{E}_{\alpha, \mu} \left\{\norm{\alpha \mathbf{y}+\mu\mathbf{1}- \operatorname{D}_{\boldsymbol{\theta}}(\alpha \mathbf{y}+\mu \mathbf{1},\alpha\sigma_n) }^2 \right.
  \left. + 2(\alpha\sigma_n)^2\,\text{div}  \operatorname{D}_{\boldsymbol{\theta}}(\alpha \mathbf{y}+\mu \mathbf{1},\alpha\sigma_n) \right\} \notag \\ 
 & + \mathbb{E}_{g, \sigma', \mathbf{\eta}}\left\{  \norm{\mathbf{T}_g\hat{\mathbf{x}}_{\mathbf{{\boldsymbol{\theta}}}} -\operatorname{D}_{\boldsymbol{\theta}}(\mathbf{AT}_g\hat{\mathbf{x}}_{\mathbf{{\boldsymbol{\theta}}}}+\mathbf{\sigma}'\boldsymbol{\eta},\sigma')}^2 \right\}
\end{align}

where $\hat{\mathbf{x}}_{\mathbf{{\boldsymbol{\theta}}}} = \operatorname{D}_{\boldsymbol{\theta}}(\mathbf{y},\sigma_n)$, $\sigma'\sim \mathcal{U}(0,\sigma_n)$, $\boldsymbol{\eta} \sim \mathcal{N}(\mathbf{0},\mathbf{I})$. The first two terms are the same as in the denoising setting \Cref{eqn:norm_eq_noise_sure} and allow generalizing to noise levels below the measurement level $\sigma_n$, and the last term is the EI loss~\citep{Chen_2022_CVPR} which allows learning in the nullspace of $\mathbf{A}$.

\paragraph{Posterior Sampling}


The learned network can be used to approximate the MMSE denoiser in measurement space as  $\mathbb{E}\{ \mathbf{Ax}|\mathbf{A}\mathbf{x} + \sigma \boldsymbol{\eta} \} =\mathbf{A}\mathbb{E}\{ \mathbf{x}|\mathbf{y} \}\approx \mathbf{A}\operatorname{D}_{\boldsymbol{\theta}}(\mathbf{y},\sigma)$.
Thus, as in the denoising setting in~\Cref{sec:denoising_ex}, we can run the reverse SDE in~\cref{eqn:reverse_sde} in measurement space using the denoiser $\mathbf{A}\circ\operatorname{D}_{\boldsymbol{\theta}}$ from $\sigma_n$ to $\sigma_{\min}$, to sample an (almost) noiseless measurement $\mathbf{z} \sim p(\mathbf{A}\mathbf{x} | \mathbf{A}\mathbf{x} + \sigma_n \boldsymbol{\eta})$, and finally obtain an image space posterior sample as $\mathbf{x} = \operatorname{D}_{\boldsymbol{\theta}}(\mathbf{z}, \sigma_{\min})$.
The resulting algorithm is summarized in Appendix A with \Cref{alg:inference}.

If the operator $\mathbf{A}$ is injective over the support of $p(\mathbf{x})$, that is, if reconstructions in the noiseless case are exact (see \cite{tachella2022unsupervisedlearningincompletemeasurements} for technical details), then the final reconstruction step yields a sample of the correct posterior distribution $p(\mathbf{x}|\mathbf{y})$.

\begin{wrapfigure}{r}{0.45\textwidth}
    \centering
    \includegraphics[width=0.99\linewidth]{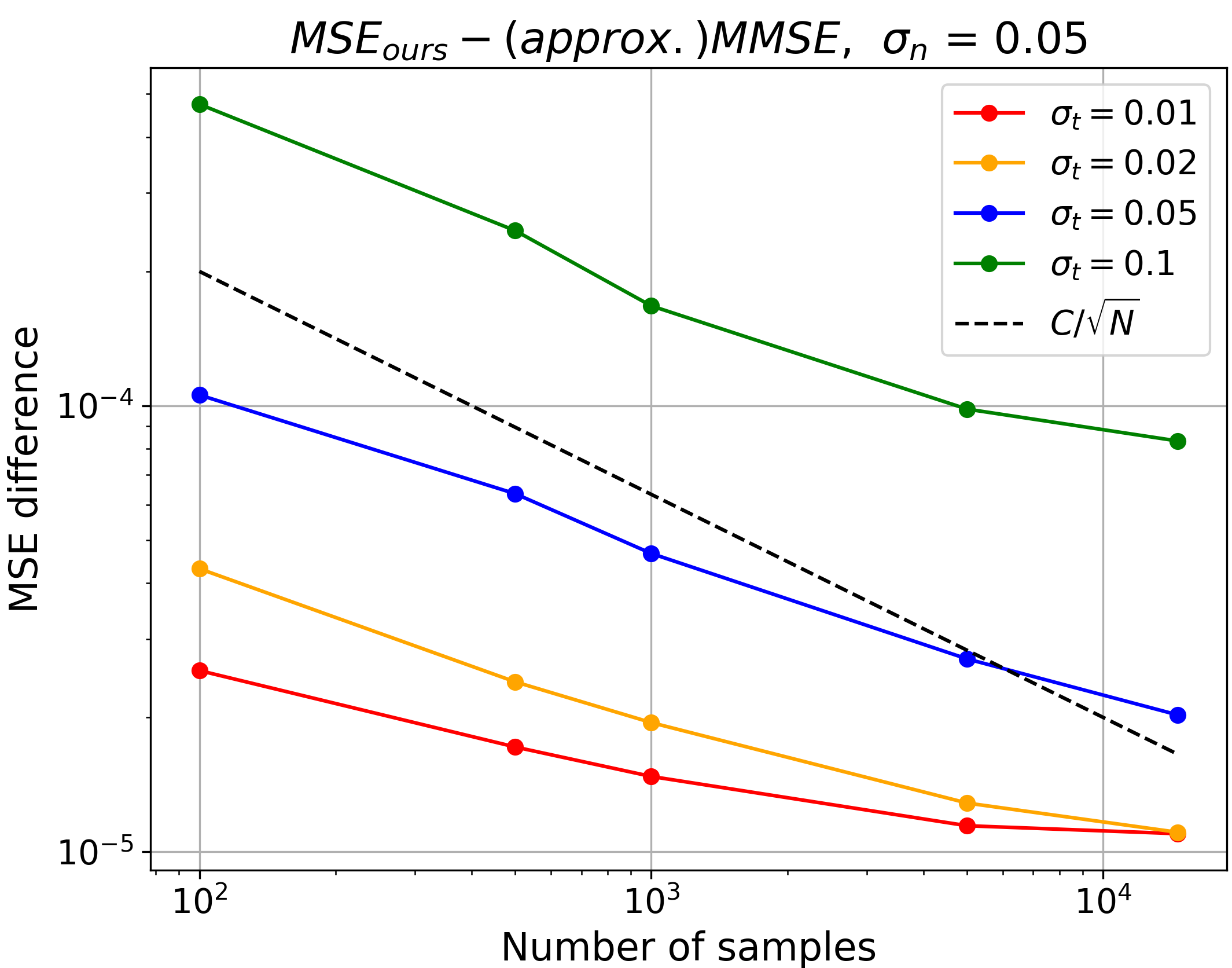}
    \vspace{-20pt}
    \caption{Denoiser performance on AFHQ with varying amounts of only noised training samples at $\sigma_n = 0.05$.}
    \vspace{-5pt}
    \label{fig:sample_cplx}
\end{wrapfigure}

\subsection{Experiments}
\paragraph{Baselines}
We compare our approach to EI~\citep{Chen_2022_CVPR}, a point estimator, with the same group of transformations. Additionally, we compare against Ambient Diffusion~\citep{daras2023ambientdiffusionlearningclean} trained without additive noise

\noindent\textbf{Training Details}
We test the extension of our invariant sampler to non-trivial forward operators by solving inpainting and demosaicing tasks. For the inpainting task, we randomly simulated one inpainting mask with an undersampling factor of $m/n = 0.7$. After applying the same inpainting mask to all samples in the datasets, we added noise $\sigma_n=0.075$ to the measurement data. Due to the random structure of the inpainting mask, we use the translational invariance of natural images and choose $\mathbf{T}_g$ to represent all circular shifts of the image. For demosaicing, we prepare our training datasets by selecting the Bayer filter and applied the corresponding $\mathbf{A}$ to each image in the training dataset and added noise $\sigma_n=0.075$. We assumed rotational invariance of the underlying signal set and selected the group of transformations $\mathbf{T}_g$ that represents all possible rotations. For both inpainting and demosaicing experiments, we used a training set size of $15000$ from AFHQ ($64 \times 64$ pixels). We utilize the DeepInverse~\citep{tachella2025deepinversepythonpackagesolving} package for applying transformations in the equivariant imaging loss.

\noindent\textbf{Inference Details}
 For inference, we use~\Cref{alg:inference} with the same parameters as in~\Cref{sec:denoising_ex}, except we set $\mathbf{A}$ to the respective measurement operator.

\noindent\textbf{Sampler Performance}
In \Cref{tab:afhq_inpainting_metrics}, we show quantitative results comparing each self-supervised approach on inpainting and demosaicing. In both settings, EI retains superior distortion metrics (as expected for a point estimator analogous to an MMSE estimator~\citep{Blau_2018}), but our approach beats both EI and Ambient Diffusion in perceptual quality metrics. Ambient Diffusion performs poorly because it lacks access to measurements from different measurement operators that cover the full ambient space. This means that at training time, there is a non-trivial nullspace that the model is never given self-supervised guidance to predict. We provide example restorations in the Appendix A(\Cref{fig:afhq_inpaint,fig:afhq_demosaic}).

\begin{table}
\caption{Image inpainting (top) and Demosaicing (bottom) metrics on AFHQ $64\times 64$.}
  \label{tab:afhq_inpainting_metrics}
\resizebox{\linewidth}{!}{
  \begin{tabular}{|c c c c c c c c c|}
    \toprule
    Task & $\sigma_n$ & Solver & Sampler & Self Sup. & PSNR $(\uparrow)$ & SSIM ($\uparrow$) & LPIPS ($\downarrow$) & FID ($\downarrow$) \\
    \midrule
    \multirow{3}{*}{Inpainting} & \multirow{3}{*}{$0.075$} 
                               & \multicolumn{1}{l}{EI~\citep{Chen_2022_CVPR}} & \multicolumn{1}{l}{ } &
                               \multicolumn{1}{l}{\checkmark} &\multicolumn{1}{l}{$\mathbf{27.92}$} & \multicolumn{1}{l}{$\mathbf{0.867}$} & \multicolumn{1}{l}{$\underline{0.043}$} & \multicolumn{1}{l}{$\underline{25.259}$} \\
                                & & \multicolumn{1}{l}{OURS} &
                                 \multicolumn{1}{l}{\checkmark} &
                                 \multicolumn{1}{l}{\checkmark} &\multicolumn{1}{l}{$\underline{27.74}$} & \multicolumn{1}{l}{$\underline{0.864}$} & \multicolumn{1}{l}{$\mathbf{0.031}$} & \multicolumn{1}{l}{$\mathbf{21.114}$} \\
                               &  & \multicolumn{1}{l}{Ambient Diff.~\citep{daras2023ambientdiffusionlearningclean}} &
                                 \multicolumn{1}{l}{\checkmark} &
                                 \multicolumn{1}{l}{\checkmark} &\multicolumn{1}{l}{$25.21$} & \multicolumn{1}{l}{$0.783$} & \multicolumn{1}{l}{$0.067$} & \multicolumn{1}{l}{$44.921$}\\
    \midrule
    \multirow{3}{*}{Demosaic} & \multirow{3}{*}{$0.075$} 
                               & \multicolumn{1}{l}{EI~\citep{Chen_2022_CVPR}} & \multicolumn{1}{l}{ } &
                               \multicolumn{1}{l}{\checkmark} &\multicolumn{1}{l}{$\mathbf{26.48}$} & \multicolumn{1}{l}{$\mathbf{0.836}$} & \multicolumn{1}{l}{$\underline{0.067}$} & \multicolumn{1}{l}{$\underline{45.426}$} \\
                               &  & \multicolumn{1}{l}{OURS} &
                                 \multicolumn{1}{l}{\checkmark} &
                                 \multicolumn{1}{l}{\checkmark} &\multicolumn{1}{l}{$\underline{26.05}$} & \multicolumn{1}{l}{$\underline{0.835}$} & \multicolumn{1}{l}{$\mathbf{0.050}$} & \multicolumn{1}{l}{$\mathbf{42.289}$} \\
                               &  & \multicolumn{1}{l}{Ambient Diff.~\citep{daras2023ambientdiffusionlearningclean}} &
                                 \multicolumn{1}{l}{\checkmark} &
                                 \multicolumn{1}{l}{\checkmark} &\multicolumn{1}{l}{$7.77$} & \multicolumn{1}{l}{$0.078$} & \multicolumn{1}{l}{$0.532$} & \multicolumn{1}{l}{$232.951$}\\
    \bottomrule
  \end{tabular}}
\end{table}

\section{Conclusion}
\label{sec:conclusion}
We present a method for learning a denoiser from noisy data alone. In particular, the proposed method can learn below the noise level of the data, which is necessary for obtaining posterior samples with DMs. Our denoiser performs on par with supervised learning in noise levels below the measurement noise and can be used in diffusion sampling schemes. Moreover, our method can be extended to learn from noisy and incomplete measurements associated with a single degradation operator, producing higher perceptual quality images than existing self-supervised generative techniques. Future work will include extending this method to data with unknown noise levels~\citep{tachella2025unsureselfsupervisedlearningunknown}, other measurement processes (e.g., MRI), as well as leveraging model architectures specific to image restoration (e.g., unrolled networks).

\subsubsection*{Acknowledgments}
BL and JT acknowledge support from NSF IFML 2019844, NSF CCF-2239687 (CAREER). MP acknowledges support by UKRI Engineering and Physical Sciences Research Council (EPSRC) (EP/Z534481/1). J. Tachella acknowledges support by the French ANR grant UNLIP (ANR-23-CE23-0013). The authors are grateful to Andres Almansa and Bernadin Tamo Amougou for valuable discussions.

\subsection*{Reproducibility statement}
To allow complete reproducibility, we commit to releasing the full code in the near future.


\bibliography{iclr2026_conference}
\bibliographystyle{iclr2026_conference}

\newpage
\appendix
\section{Appendix}
\subsection{One-step Denoising Experiments}
We have included additional figures for our one-step denoiser experiments. \Cref{fig:afhq_denoise_example,fig:nbu_denoise_example} show the one step denoiser performance of the various methods at the measurement noise level and a lower test noise level for the AFHQ and NBU datasets respectively. We observe that our method performs better at denoising below the measurement noise level compared to other self-supervised denoising techniques.  

\begin{figure}[h]
    \centering
\includegraphics[width=0.99\linewidth]{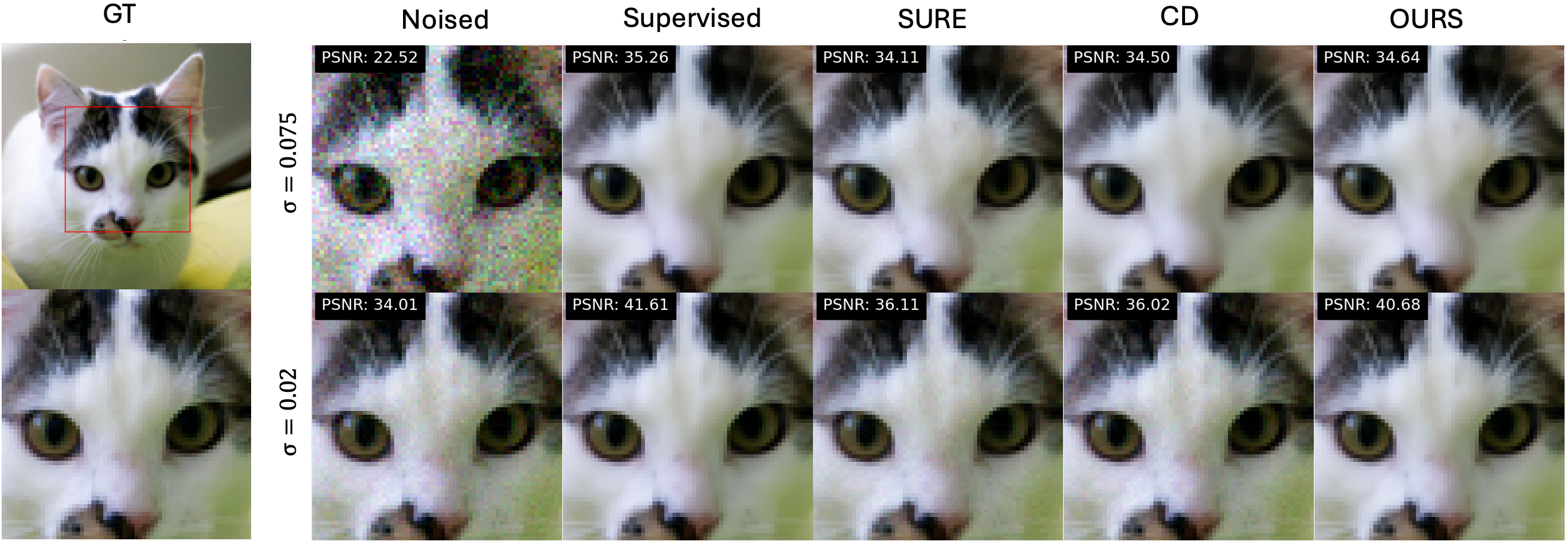}
    \caption{Example restorations of various denoisers on AFHQ dataset at test noise levels $\sigma_t=0.075$ (top) and $\sigma_t=0.02$ (bottom). All models, except for supervised were trained on only noisy data with $\sigma_n = 0.075$.}
    \label{fig:afhq_denoise_example}
\end{figure}

\begin{figure}[h]
    \centering
\includegraphics[width=0.99\linewidth]{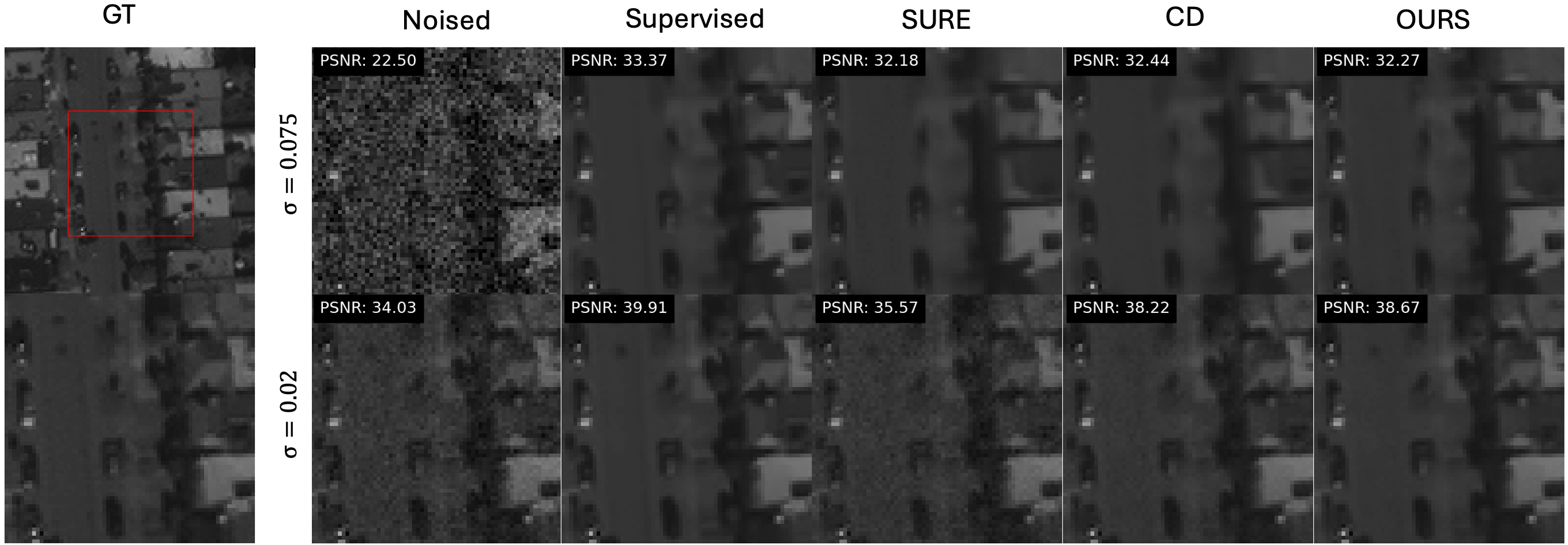}
    \caption{Example restorations of various denoisers on NBU dataset at test noise levels $\sigma_t=0.075$ (top) and $\sigma_t=0.02$ (bottom). All models, except for supervised were trained on only noisy data with $\sigma_n = 0.075$.}
    \label{fig:nbu_denoise_example}
\end{figure}

\subsection{Diffusion Sampling}
Additional figures have been provided for diffusion sampling experiments on the AFHQ and NBU datasets. \Cref{fig:afhq_sample_ex,fig:nbu_sample_ex} show example diffusion samples for supervised and self-supervised approaches discussed in the paper on AFHQ and NBU respectively. \Cref{fig:afhq_stack} shows diffusion samples for different training + inference noise levels with accompanying radial spectrum plots in \Cref{fig:afhq_spectrum}. Here we see that while one-step supervised and self-supervised MMSE denoisers tend to reduce high frequency features, our method retains higher frequencies lending to our method providing better perceptual images.

\begin{figure*}[!h]
    \centering
    \includegraphics[width=1.00\linewidth]{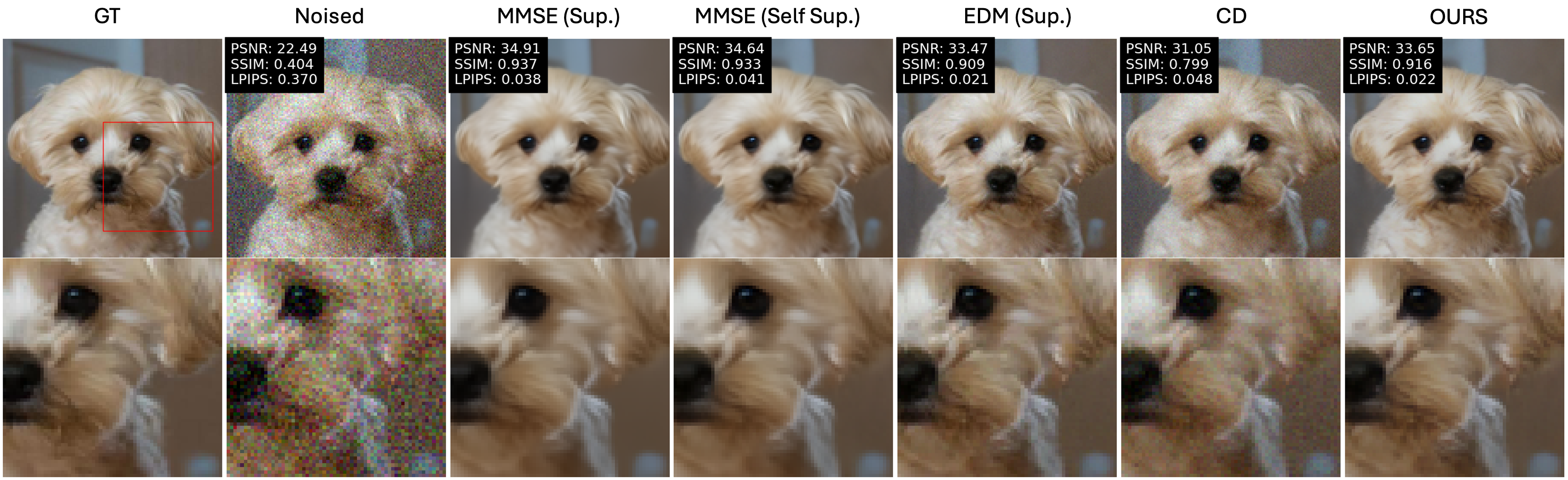}
    \caption{Example of various denoisers using diffusion sampling (except MMSE(Self Sup.) and MMSE(Sup.) columns) on AFHQ dataset with training and test noise level $\sigma_n =\sigma_t=0.075$.}
    \label{fig:afhq_sample_ex}
\end{figure*}

\begin{figure*}[!h]
    \centering
    \includegraphics[width=1.00\linewidth]{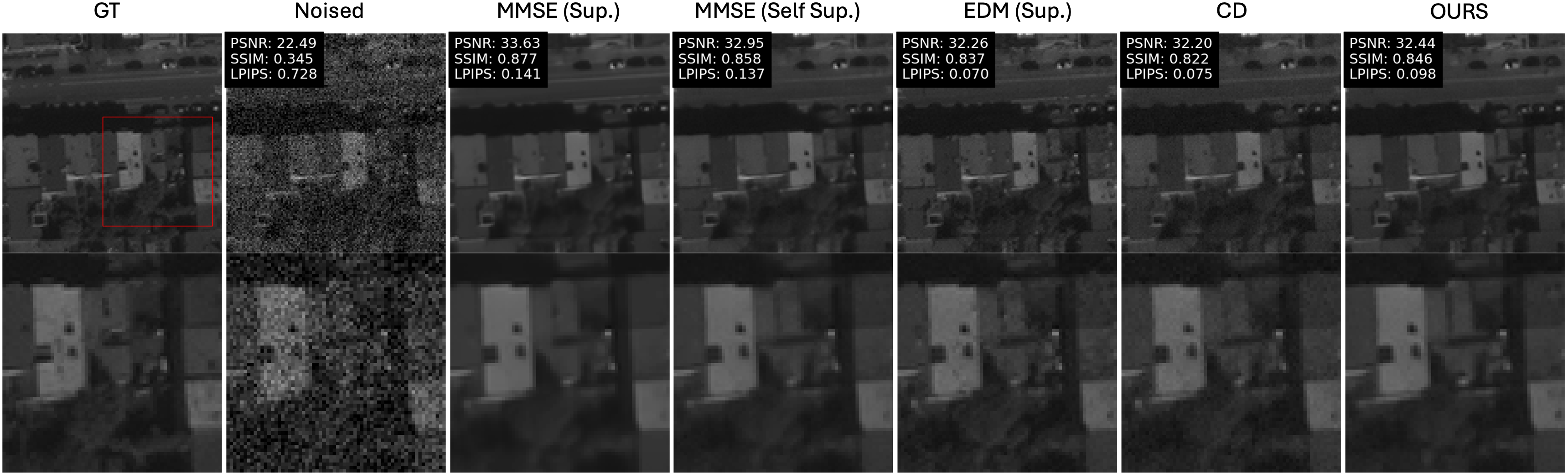}
    \caption{Example of various denoisers using diffusion sampling (except MMSE(Self Sup.) and MMSE(Sup.) columns) on NBU dataset with training and test noise level $\sigma_n =\sigma_t=0.075$.}
    \label{fig:nbu_sample_ex}
\end{figure*}

\begin{figure}[!h]
    \centering
    \includegraphics[width=0.99\linewidth]{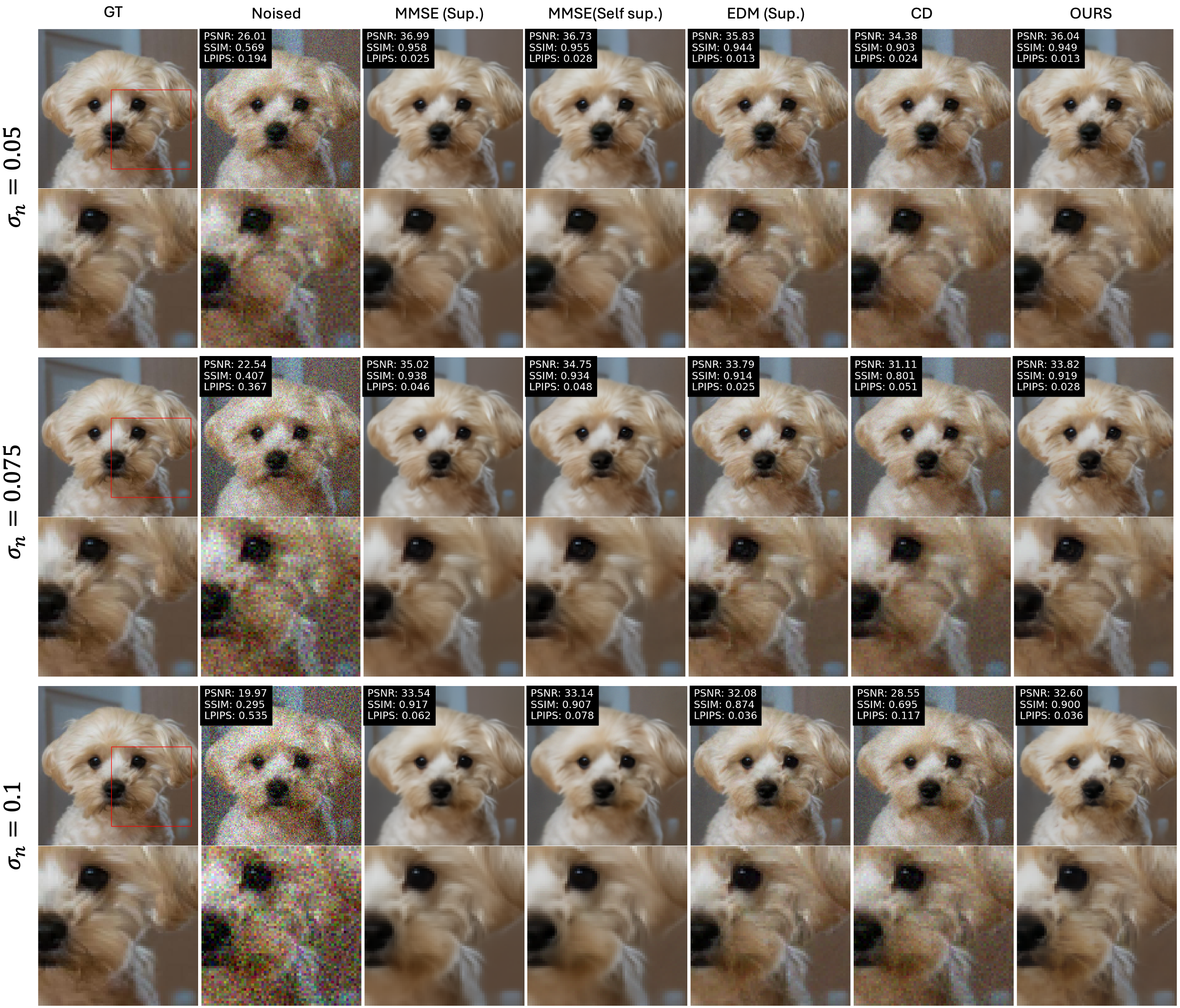}
    \caption{Example reconstructions for different training + inference noise levels on AFHQ.}
    \label{fig:afhq_stack}
\end{figure}

\begin{figure}[!h]
    \centering
    \includegraphics[width=0.99\linewidth]{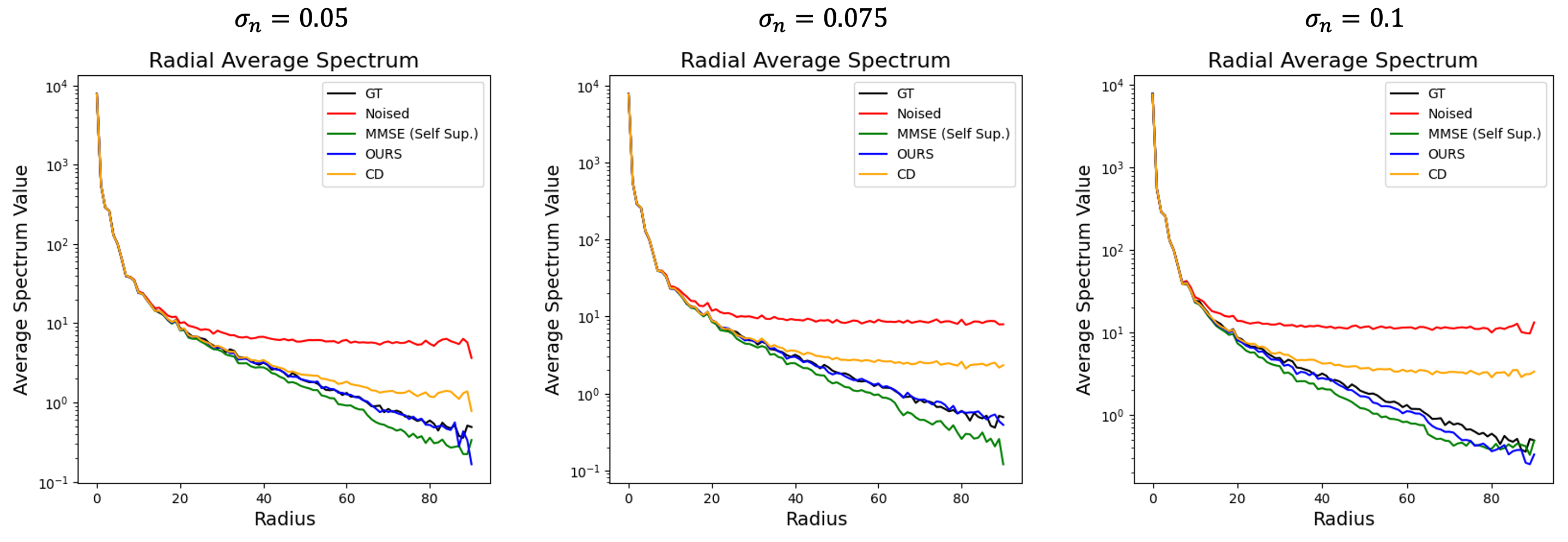}
    \caption{Radial Spectrum of images in~\Cref{fig:afhq_stack}.}
    \label{fig:afhq_spectrum}
\end{figure}

\subsection{Extension to Linear Inverse Problems}
We provide example reconstructions for the inpainting and demosaicing tasks in \Cref{fig:afhq_inpaint,fig:afhq_demosaic} respectively.

\begin{figure*}[!h]
    \centering
    \includegraphics[width=1.00\linewidth]{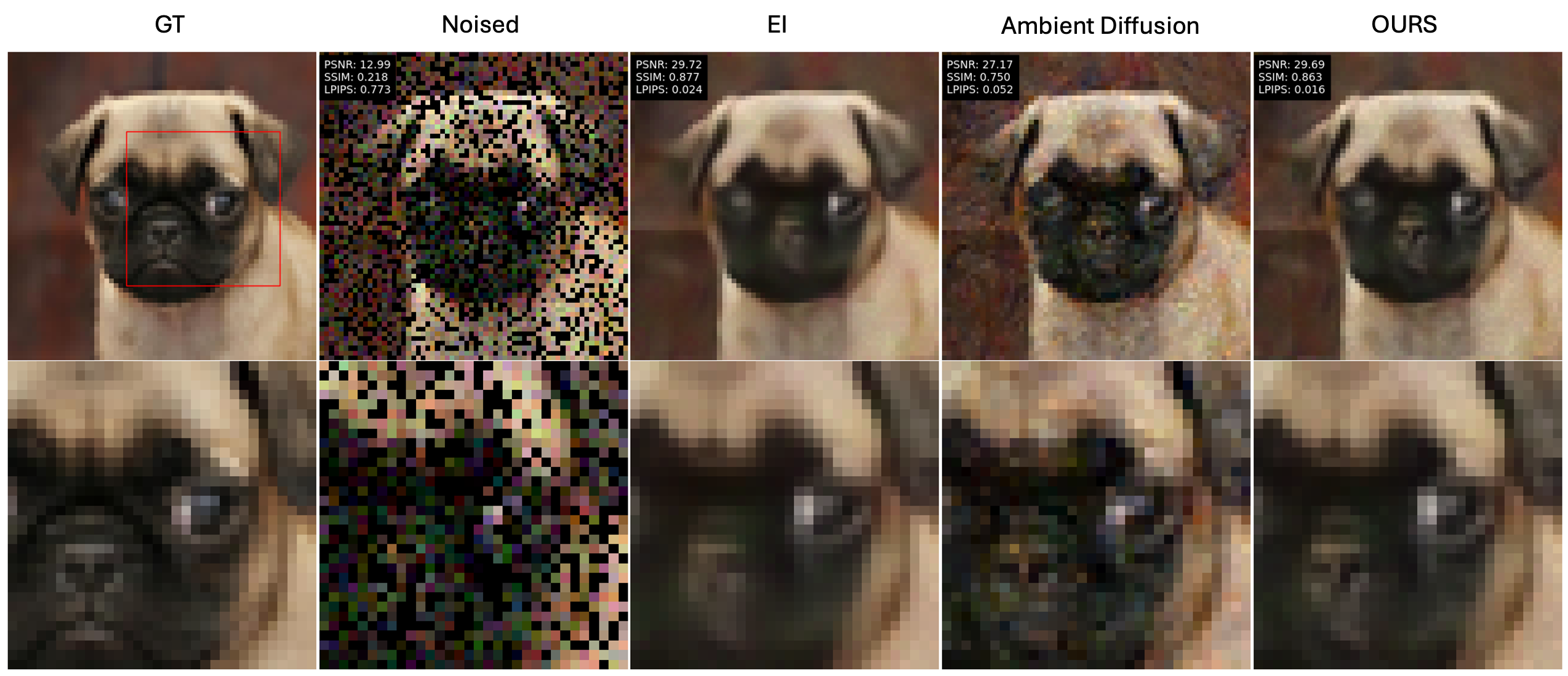}
    \caption{AFHQ inpainting example where the models are all trained in a self-supervised fashion.}
    \label{fig:afhq_inpaint}
\end{figure*}

\begin{figure*}[!h]
    \centering
    \includegraphics[width=1.00\linewidth]{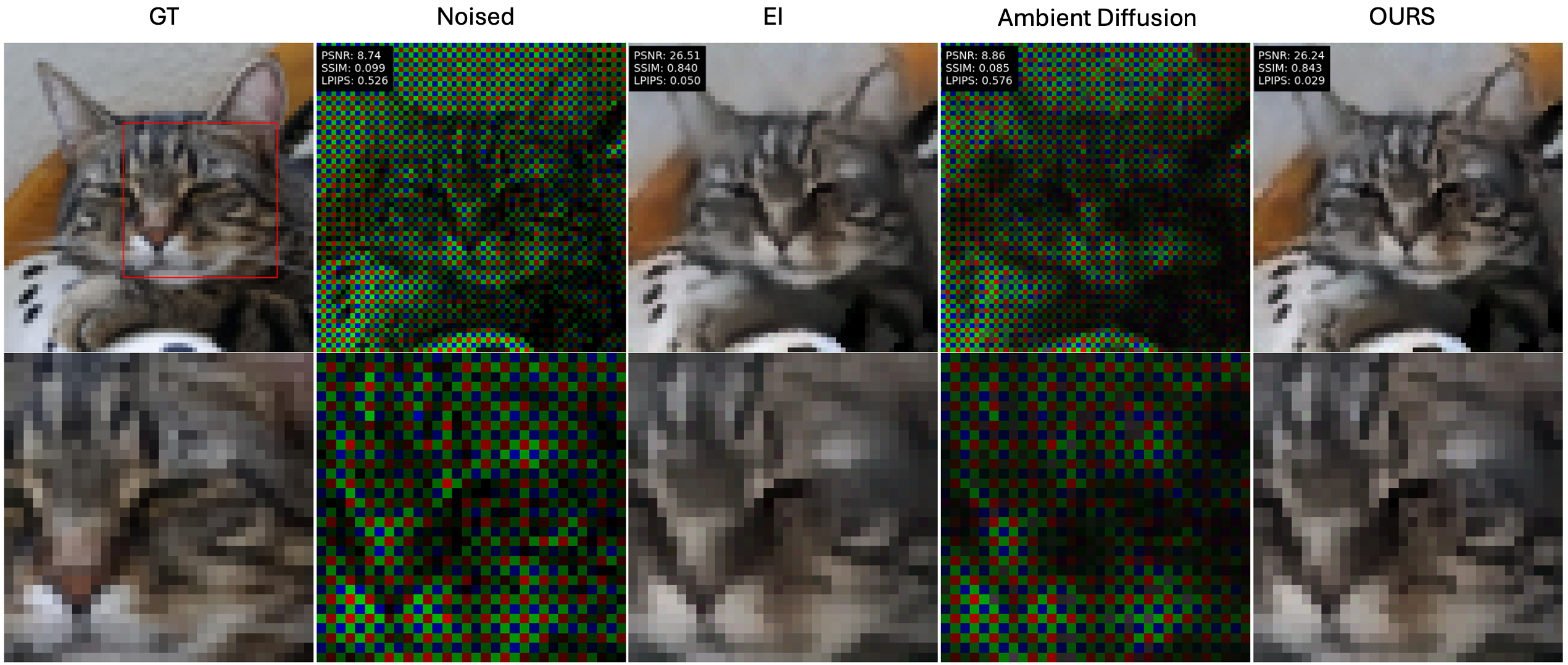}
    \caption{AFHQ demosaic example where the models are all trained in a self-supervised fashion.}
    \label{fig:afhq_demosaic}
\end{figure*}

\subsection{Inference Procedure}
For diffusion sampling in our experiments we use a slightly modified version of the samplers proposed in~\cite{karras2022elucidatingdesignspacediffusionbased} by conducting sampling in measurement space. We show the inference procedure in~\Cref{alg:inference}.
\begin{algorithm}[h]
\caption{Equivariant Sampling Inference}
\label{alg:inference}
\begin{algorithmic}[1]
\Require $\operatorname{D}_{\mathbf{{\boldsymbol{\theta}}}}(\cdot, \sigma),\: \{\sigma_K=\sigma_n, \dots, \sigma_1 = \sigma_{\min}\}, \mathbf{y}$
\State $\mathbf{y}_\textrm{next} = \mathbf{y}$
\For{$i \in \{K, \dots, 1\}$}
    \State $\mathbf{y}_\textrm{cur} = \mathbf{y}_\textrm{next}$
    \State $\hat{\mathbf{x}} = \left(\mathbf{x}_\textrm{cur}-\operatorname{D}_{\boldsymbol{\theta}}(\mathbf{A}^{\top}\mathbf{y}_{\textrm{cur}},\sigma_i)\right)/\sigma_{i}$
    \State $\mathbf{y}_\textrm{next} = \mathbf{y}_\textrm{cur} + 2(\sigma_{i+1}-\sigma_i)\mathbf{A} \hat{\mathbf{x}}$
    \State $\boldsymbol{\eta} \sim \mathcal{N}(\boldsymbol{0},\mathbf{I})$
    \State $\mathbf{y}_\textrm{next}=\mathbf{y}_\textrm{next} + \sqrt{2(\sigma_{i+1}-\sigma_i)\sigma_i} \, \boldsymbol{\eta}$    
    \If{$i>1$}
        \State $\hat{\mathbf{x}}' = (\mathbf{x}_\textrm{next}-\operatorname{D}_{\boldsymbol{\theta}}(\mathbf{A}^{\top} \mathbf{y}_\textrm{next}, \sigma_{i+1}))/\sigma_{i+1}$
        \State $\mathbf{y}_\textrm{next} = \mathbf{y}_\textrm{cur}+\frac{1}{2}(\sigma_{i+1}-\sigma_i)\mathbf{A}(\hat{\mathbf{x}} + \hat{\mathbf{x}}')$
    \EndIf
\EndFor
\State \textbf{return} $\hat{\mathbf{x}}$
\end{algorithmic}
\end{algorithm}

\subsection{Patch norms}
To investigate the assumption that many real image distributions are approximately scale-invariant we plot the histogram of patch-wise norms for several image distributions using various patch sizes in \Cref{fig:patch_norm}. We see that, in fact, we have a spread in energy within each dataset which implies we may be observing a dataset that exhibits weak invariance. 
\begin{figure}[!h]
    \centering
    \includegraphics[width=0.99\linewidth]{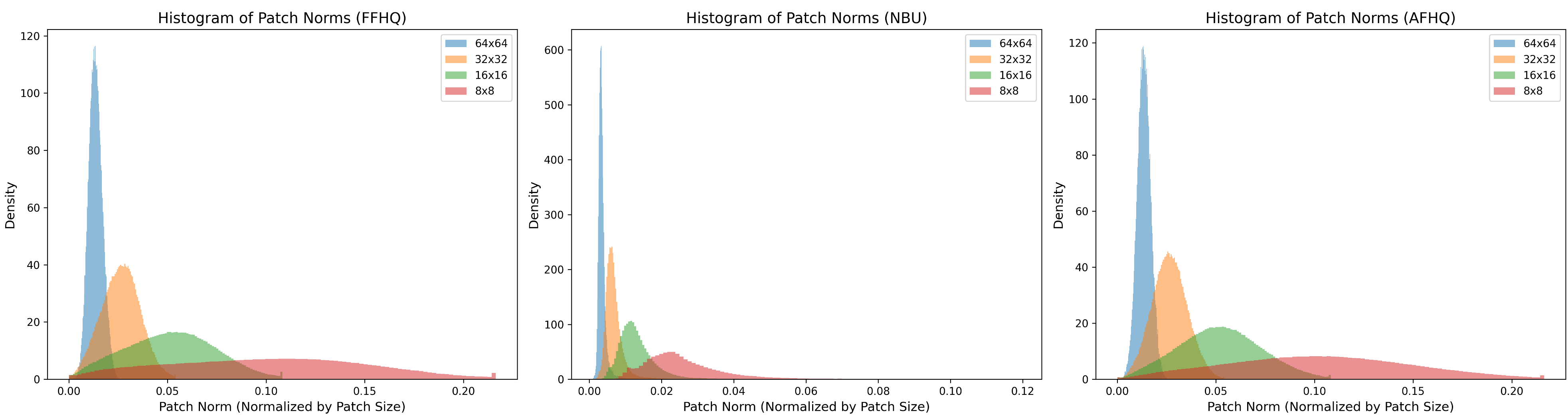}
    \caption{Histogram of image patches for each image distribution.}
    \label{fig:patch_norm}
\end{figure}



\end{document}